\definecolor{darkgreen}{rgb}{0,0.6,0}
\newcommand{\kibitz}[2]{\ifnum\Comments=1{\color{#1}{#2}}\fi}
\newcommand{\omer}[1]{\kibitz{blue}{[Omer :#1]}}
\newcommand{\RemoveAlgoNumber}{\renewcommand{\fnum@algocf}{\AlCapSty{\AlCapFnt\algorithmcfname}}}
\newcommand{\RevertAlgoNumber}{\algocf@resetfnum}
\newcommand{\mX}{\mathcal{X}}
\newcommand{\mI}{\mathcal{I}}
\newcommand{\mF}{\mathcal{F}}
\newcommand{\mY}{\mathcal{Y}}
\newcommand{\mZ}{\mathcal{Z}}
\newcommand{\mH}{\mathcal{H}}
\newcommand{\mG}{\mathcal{G}}
\newcommand{\mD}{\mathcal{D}}
\newcommand{\mR}{\mathcal{R}}
\newcommand{\mN}{\mathcal{N}}
\newcommand{\mS}{\mathcal{S}}
\newcommand{\mT}{\mathcal{T}}
\newcommand{\ind}{\mathds{1}}
\newcommand{\R}{\mathbb{R}}
\newcommand{\defeq}{\stackrel{\text{def}}{=}}
\newcommand{\vc}{\textnormal{VCdim}}
\newcommand{\pdim}{\textnormal{Pdim}}
\newcommand{\Var}{\mathrm{Var}}
\theoremstyle{plain}
\newtheorem{claim}{Claim}
\newtheorem{theorem}{Theorem}
\newtheorem{lemma}{Lemma}
\newtheorem{corollary}{Corollary}
\theoremstyle{definition}
\newtheorem{example}{Example}
\newcommand\bl[1]{\boldsymbol{ #1 } }
\newcommand{\norm}[1]{\left\lVert#1\right\rVert}
\newcommand\abs[1]{\left| #1  \right|}
\DeclareMathOperator*{\argmax}{arg\,max} 
\DeclareMathOperator{\E}{\mathbb{E}}
\DeclareMathOperator{\sign}{sign}
\title{Competing Prediction Algorithms}
\author{
  Omer Ben{-}Porat \\
  Technion - Israel Institute of Technology\\
  Haifa 32000 Israel\\
  \texttt{omerbp@campus.technion.ac.il} 
  \And 
  Moshe Tennenholtz\\
  Technion - Israel Institute of Technology\\
  Haifa 32000 Israel\\
  \texttt{moshet@ie.technion.ac.il} \\
}
\begin{document}

\maketitle
{\centering \large \textbf{Remark}\vspace{5mm}\\} An updated and significantly improved version of this paper was published in Economics and Computation 2019 under the name ``Regression Equilibrium'', and is publicly available here: \url{https://arxiv.org/abs/1905.02576}. Please refer to that version.\\ 

\begin{abstract}

Prediction is a well-studied machine learning task, and prediction algorithms are core ingredients in online products and services.
Despite their centrality in the competition between online companies who offer prediction-based products, the \textit{strategic} use of prediction algorithms remains unexplored. The goal of this paper is to examine strategic use of prediction algorithms.
We introduce a novel game-theoretic setting that is based on the PAC learning framework, where each player (aka a prediction algorithm at competition) seeks to maximize the sum of points for which it produces an accurate prediction and the others do not. We show that algorithms aiming at generalization may wittingly miss-predict some points to perform better than others on expectation. We analyze the empirical game, i.e. the game induced on a given sample, prove that it always possesses a pure Nash equilibrium, and show that every better-response learning process converges. Moreover, our learning-theoretic analysis suggests that players can, with high probability, learn an approximate pure Nash equilibrium for the whole population using a small number of samples. 
\end{abstract}

\section{Introduction}

Prediction plays an important role in twenty-first century economics.  An important example is the way online retailers advertise services and products tailored to predict individual taste.  Companies collect massive amounts of data and employ sophisticated machine learning algorithms to discover patterns and seek connections between different user groups.  A company can offer customized products, relying on user properties and past interactions, to outperform the one-size-fits-all approach. For instance, after examining sufficient number of users and the articles they read, media websites promote future articles predicted as having a high probability of satisfying a particular user.

For revenue-seeking companies, prediction is another tool that can be exploited to increase revenue. When companies' products are alike, the chance that a user will select the product of a particular company decreases.  In this case a company may purposely avoid offering the user this product and offer an alternative one in order to maximize the chances of having its product selected.
Despite the intuitive clarity of the tradeoff above and the enormous amount of work done on prediction in the machine learning and statistical learning communities, far too little attention has been paid to the study of prediction in the context of \textit{competition}. 

In this paper we introduce what is, to the best of our knowledge, a first-ever attempt to study how the selection of prediction algorithms is affected by strategic behavior in a competitive setting, using a game-theoretic lens. We consider a space of users, where each user is modeled as a triplet $(x,y,t)$ of an instance, a label and a threshold, respectively. A user's instance is a real vector that encodes his\footnote{For ease of exposition, third-person singular pronouns are ``he'' for a user and ``she'' for a player.} properties; the label is associated with his taste, and the threshold is the ``distance'' he is willing to accept between a proposed product and his taste. Namely, the user associated with $(x,y,t)$ embraces a customized product $f(x)$ if $f(x) - y$ is less than or equal to $t$. In such a case, the user is \textit{satisfied} and willing to adopt the product. If a user is satisfied with several products (of several companies), he selects one uniformly at random. Indeed, the user-model we adopt is aligned with the celebrated ``Satisficing" principle of \citeauthor{simon1956rational} \cite{simon1956rational}, and other widely-accepted models in the literature on choice prediction, e.g. the model of selection based on small samples \cite{barron2003small,erev2010choice}.
Several players are equipped with infinite strategy spaces, or hypothesis classes in learning-theoretic terminology. A player's strategy space models the possible predictive functions she can employ. Players are competing for the users, and a player's payoff is the expected number of users who select her offer. To model uncertainty w.r.t. the users' taste, we use the PAC-learning framework of \citeauthor{valiant1984theory} \cite{valiant1984theory}. We assume user distribution is unknown, but the players have access to a sequence of examples, containing instances, labels and thresholds, with which they should optimize their payoffs w.r.t. the unknown underlying user distribution.

From a machine learning perspective we now face the challenge of what would be a good prediction algorithm profile, i.e. a set of algorithms for the players such that no player would deviate from her algorithm assuming the others all stick to their algorithms. Indeed, such a profile of algorithms determines a \textit{pure Nash equilibrium} (PNE) of prediction algorithms, a powerful solution concept which rarely exists in games. An important question in this regard is whether such a profile exists. An accompanying question is whether a learning dynamics in which players may change their prediction algorithms to better-respond to others would converge. 
Therefore, we ask:

$\bullet$ Does a PNE exist?\\
$\bullet$ Will the players be able to find it efficiently with high probability using a better-response dynamics?

We prove that the answer to both questions is yes. 
We first show that when the capacity of each strategy space is bounded (i.e., finite pseudo-dimension), players can learn payoffs from samples. Namely, we show that the payoff function of each player uniformly converges over all possible strategy profiles (that include strategies of the other players). Thus with high probability a player's payoff under any strategy profile is not too distant from her empirical payoff. 
Later, we show that an empirical PNE always exists, i.e., a PNE of the game induced on the empirical sample distribution. Moreover, we show that any learning dynamics in which players improve their payoff by more than a non-negligible quantity converges fast to an approximate PNE. 
Using the two latter results, we show an interesting property of the setting: the elementary idea of sampling and better-responding according to the empirical distribution until convergence leads to an approximate PNE of the game on the whole population. We analyze this learning process, and formalize the above intuition via an algorithm that runs in polynomial time in the instance parameters, and returns an approximate PNE with high probability.
Finally, we discuss the case of infinite capacities, and demonstrate that non-learnability can occur even if the user distribution is known to all players. 

\paragraph{Related work}
The intersection of game theory and machine learning has increased rapidly in recent years. Sample empowered mechanism design \cite{nisan1999algorithmic} is a fruitful line of research. For example, \cite{cole2014sample,GonczarowskiN17,morgenstern2015pseudo} reconsider  auctions where the auctioneer can sample from bidder valuation functions, thereby relaxing the assumption of prior knowledge on bidder valuation distribution \cite{myerson1981optimal}.
Empirical distributions also play a key role in other lines of research  \cite{althofer1994sparse,babichenko2016empirical,lipton2003playing}, where e.g.  \cite{babichenko2016empirical} show how to obtain an approximate equilibrium by sampling any mixed equilibrium. The PAC-learning framework proposed by \citeauthor{valiant1984theory} \cite{valiant1984theory} has also been extended by \citeauthor{blum2017collaborative} \cite{blum2017collaborative}, who consider a collaborative game where players attempt to learn the same underlying prediction function, but each player has her own distribution over the space. In their work each player can sample from her own distribution, and the goal is to use information sharing among the players to reduce the sample complexity.

Our work is inspired by Dueling Algorithms \cite{immorlica2011dueling}. \citeauthor{immorlica2011dueling} analyze an optimization problem from the perspective of competition, rather than from the point of view of a single optimizer. Our model is also related to Competing Bandits \cite{MansourSW18}.
\citeauthor{MansourSW18} consider a competition between two bandit algorithms faced
with the same sample, where users arrive one by one and choose between the two algorithms.
In our work players also share the same sample, but we consider an offline setting and not an online one;  infinite strategy spaces and not a finite set of actions; context in the form of property vector for each user; and an arbitrary number of asymmetric players, where asymmetry is reflected in the strategy space of each player.

Most relevant to our work is \cite{porat2017best}. The authors present a learning task where a newcomer agent is given a sequence of examples, and wishes to learn a best-response to the players already on the market. They assume that the agent can sample triplets composed of instance, label and current market prediction, and define the agent's payoff as the proportion of points (associated with users) she predicts better than the other players. Indeed, \cite{porat2017best} introduces a learning task incorporating economic interpretation into the objective function of the (single) optimizer, but in fact does not provide any game-theoretic analysis. In comparison, this paper considers game-theoretic interaction between players, and its main contribution lies in the analysis of such interactions. 
Since learning dynamics consists of steps of unilateral deviations that improve the deviating player's payoff, the Best Response Regression of \citeauthor{porat2017best}\cite{porat2017best} can be thought of as an initial step to this work.

\paragraph{Our contribution}
Our contribution is three-fold. First, we explicitly suggest that prediction algorithms, like other products on the market, are in competition.
This novel view emphasizes the need for stability in prediction-based competition similar to  \citeauthor{Hotelling}'s stability in spatial competition \cite{Hotelling}.

Second, we introduce an extension of the PAC-learning framework for dealing with strategy profiles, each of which is a sequence of functions. We show a reduction from payoff maximization to loss minimization, which is later used to achieve bounds on the sample complexity for uniform convergence over the set of profiles. We also show that when players have approximate better-response oracles, they can learn an approximate PNE of the empirical game. The main technical contribution of this paper is an algorithm which, given $\epsilon,\delta$, samples a polynomial number of points in the game instance parameters, runs any $\epsilon$-better-response dynamics, and returns an $\epsilon$-PNE with probability of at least $1-\delta$.

Third, we consider games with at least one player with infinite pseudo-dimension. We show a game instance where each player can learn the best prediction function from her hypothesis class if she were alone in the game, but a PNE of the empirical game is not generalized. This inability to learn emphasizes that strategic behavior can introduce further challenges to the machine learning community.

\section{Problem definition}
In this section we formalize the model. We begin with an informal introduction to elementary concepts in both game theory and learning theory that are used throughout the paper.

\paragraph{Game theory}
A non-cooperative game is composed of a set of players $\mN=\{1,\dots N\}$; a strategy space $\mH_i$ for every player $i$; and a payoff function $\pi_i:\mH_1 \times \cdots \times \mH_N \rightarrow \mathbb R$ for every player $i$. The set $\mH =\mH_1 \times \cdots \times \mH_N$ contains of all possible strategies, and a tuple of strategies $\bl h =(h_1,\dots h_N) \in \mH$ is called a \textit{strategy profile}, or simply a profile. We denote by $\bl h_{-i}$ the vector obtained by omitting the $i$-th component of $\bl h$.

A strategy $h_i'\in \mH_i $ is called a \textit{better response} of player $i$ with respect to a strategy profile $\bl h$ if $\pi_i(h_i', \bl h_{-i}) > \pi_i(\bl h)$. Similarly, $h_i'$ is said to be an \textit{$\epsilon$-better response} of player $i$ w.r.t. a strategy profile $\bl h$ if $\pi_i(h_i', \bl h_{-i}) \geq  \pi_i(\bl h) +\epsilon$, and a \textit{best response} to $\bl h_{-i}$ if $\pi_i(h_i', \bl h_{-i}) \geq \sup_{h_i\in \mH_i}\pi_i(h_i,\bl h_{-i})$ .

We say that a strategy profile $\bl h$ is a \textit{pure Nash equilibrium} (herein denoted PNE) if every player plays a best response under $\bl h$. We say that a strategy profile $\bl h$ is an $\epsilon$-PNE if no player has an $\epsilon$-better response under $\bl h$, i.e. for every player $i$ it holds that $\pi_i(\bl h) \geq \sup_{h_i'\in \mH_i}\pi_i(h_i',\bl h_{-i})-\epsilon$.

\paragraph{Learning theory}

Let $F$ be a class of binary-valued functions $F\subseteq {\{0,1\}}^\mX$. Given a sequence $\mS=(x_1,\dots x_m)\in \mX^m$, we denote the \textit{restriction} of $F$ to $\mS$ by $F\cap \mS= \left\{ \left(f(x_1),\dots,f(x_m)\right)\mid f\in F  \right\}$. The \textit{growth function} of $F$, denoted $\Pi_F:\mathbb N \rightarrow \mathbb N$, is defined as $\Pi_F(m) = \max_{\mS\in \mX^m}\abs{F\cap \mS}$. We say that $F$ \textit{shatters} $\mS$ if $\abs{F\cap \mS}=2^{\abs{\mS}}$. The Vapnik-Chervonenkis dimension of a binary function class is the cardinality of the largest set of points in $\mX$ that can be shattered by $F$, 
$\vc(F)= \max\left\{m\in \mathbb N :\Pi_F(m)=2^m    \right\}$.

Let $H$ be a class of real-valued functions $H\subseteq \mathbb R^\mX$. The restriction of $H$ to $\mS \in \mX^m$ is analogously defined, $H\cap \mS= \left\{ \left(h(x_1),\dots,h(x_m)\right)\mid h\in H  \right\}$. We say that $H$ \textit{pseudo-shatters} $\mS$ if there exists $\bl r=(r_1,\dots,r_m)\in \mathbb R^m$ such that for every binary vector $\bl b=(b_1,\dots b_m)\in \{-1,1  \}^m$ there exists $h_{\bl b}\in H$ and for every $i\in [m]$ it holds that $\sign(h_{\bl b}(x_i)-r_i)=b_i $. The \textit{pseudo-dimension} of $H$ is the cardinality of the largest set of points in $\mX$ that can be pseudo-shattered by $H$, 
\[
\pdim(H)=\max\left\{m\in \mathbb N :\exists \mS\in\mX^m \text{ such that } \mS \text{ is pseudo-shattered by }  H \right\}.
\]

\subsection{Model}
\label{subsec:model}
We consider a set of users who are interested in a product provided by a set of competing players. Each user is associated with a vector $(x,y,t)$, where $x$ is the instance; $y$ is the label; and $t$ is the threshold that the user is willing to accept.

The players offer customized products to the users. When a user associated with a vector $(x,y,t)$ approaches player $i$, she produces a prediction $h_i(x)$. If $\abs{h_i(x)-y}$ is at most $t$, the user associated with $(x,y,t)$ will grant one monetary unit to player $i$. Alternatively, that user will move on to another player. We assume that users approach players according to the uniform distribution, although our model and results support any distribution over player orderings. Player $i$ has a set of possible strategies (prediction algorithms) $\mH_i$, from which she has to decide which one to use. Each player aims to maximize her expected payoff, and will act strategically to do so.

Formally, the game is a tuple $\langle \mZ,\mD,\mN,(\mH_i)_{i\in \mN} \rangle$ such that
\begin{enumerate}
\item $\mZ$ is the examples domain $\mZ = \mX \times \mY \times \mT$, where $\mX\subset \R^n$ is the instance domain;  $\mY \subset \R$ is the label domain; and $\mT \subset \R_{\geq 0}$ is the tolerance domain.
\item $\mD$ is a probability distribution over $\mZ = \mX \times \mY \times \mT$.
\item $\mN$ is the set of players, with $\abs{\mN} = N$. A strategy of player $i$ is an element from $\mH_i \subseteq \mY ^ \mX$. The space of all strategy profiles is denoted by $\mH = \bigtimes_{i=1}^N \mH_i  $.

\item For $z=(x,y,t)$ and a function $g:\mX\rightarrow \mY$, we define the indicator $\mI(z,g)$ to be 1 if the distance between the value $g$ predicted for $x$ and $y$ is at most $t$. Formally,
\[
\mI(z,g)=
\begin{cases}
1 & \abs{g(x)-y} \leq t \\
0 & \text{otherwise}
\end{cases}.
\]
\item Given a strategy profile $\bl h=(h_1,\dots h_N)$ with $h_i \in \mH_i$ for $i\in\{1,\dots N\}$ and $z=(x,y,t)\in \mZ$, let
\[
w_i(z;\bl h)=
\begin{cases}
0 & \text{ if } \mI(z,h_i)=0\\
\frac{1}{\sum_{i'=1}^N \mI(z,h_{i'})} & \text{otherwise}
\end{cases}.
\]
Note that $w_i(z;\bl h)$ represents the expected payoff of player $i$ w.r.t. the user associated with $z$. The payoff of player $i$ under $\bl h$ is the average sum over all users, and is defined by
\[
\pi_i (\bl h)= \E_{z\sim \mD}\left[ w_i(z;\bl h) \right].
\]
\item $\mD$ is unknown to the players.
\end{enumerate}

We assume players have access to a sequence of examples $\mS$. Given a game instance $\langle \mZ,\mD,\mN,(\mH_i)_{i\in \mN} \rangle$ and a sample $\mS=\{z_1,\dots z_m\}$, we denote by $\langle \mZ,\mS \sim \mD^m ,\mN,(\mH_i)_{i\in \mN} \rangle$ the \textit{empirical game}: the game over the same $\mN,\mH,\mZ$ and uniform distribution over the known $\mS \in \mZ^m$. We denote the payoff of player $i$ in the empirical game by 
\[
\pi_i^\mS (\bl h)= \E_{z \in \mS}\left[ w_i(z;\bl h) \right]=\frac{1}{m}\sum_{j=1}^m w_i(z_j;\bl h). 
\]
When $\mS$ is known from the context, we occasionally use the term \textit{empirical} PNE to denote a PNE of the empirical game. Since the empirical game is a complete information game, players can use the sample in order to optimize their payoffs. 

The optimization problem of finding a best response in our model is intriguing in its own right and deserves future study.
In this paper, we assume that each player $i$ has a polynomial $\epsilon$-better-response oracle. Namely, given a real number $\epsilon>0$, a strategy profile $\bl h$ and sample $\mS$, we assume that each player $i$ has an oracle that returns an $\epsilon$-better response to $\bl h_{-i}$ if such exists or answers false otherwise, which runs in time $\text{poly}(\frac{1}{\epsilon},m,N)$.\footnote{Notice that a best response can be found in constant time if $\mH_i$ is of constant size. In addition, in{\ifnum\Includeappendix=1 { Section \ref{sec:bestresponsregression}}\else{ the appendix}\fi} we leverage the algorithm proposed in \cite{porat2017best}, and show that it can compute a best response within the set of linear predictors efficiently when the input dimension (denoted by $n$ in the model above) is constant. We also discuss situations where a better response cannot be computed efficiently in Section \ref{sec:discussion}, and present the applicability of our models for these cases as well.}

\section{Meta algorithm and analysis}
\label{sec:meta}
Throughout this section we assume the pseudo-dimension of $\mH_i$ is finite,  and we denote it by $d_i$, i.e. $\pdim(\mH_i)=d_i<\infty$. Our goal is to propose a generic method for finding an $\epsilon$-PNE efficiently. 
The method is composed of two steps: first, it attains a sample of ``sufficient'' size. Afterwards, it runs an $\epsilon$-better-response dynamics until convergence, and returns the obtained profile. The underlying idea is straightforward, but its analysis is non-trivial. In particular, we need to show two main claims: 
\begin{itemize}
\item Given a sufficiently large sample $\mS$, the payoff of each player $i$ in the empirical game is not too far away from her payoff in the actual game, with high probability. This holds concurrently for all possible strategy profiles.
\item An $\epsilon$-PNE exists in every empirical game. Therefore, players can reach an $\epsilon$-PNE of the empirical game fast, using their $\epsilon$-better-response oracles.
\end{itemize}
These claims will be made explicit in forthcoming Subsections \ref{subsec:sample} and \ref{subsec:dyn}. We formalize the above discussion via Algorithm \ref{algorithm:betterres} in Subsection \ref{subsec:alg}.

\subsection{Uniform convergence in probability}
\label{subsec:sample}
We now bound the probability (over all choices of $\mS$) of having player $i$'s payoff (for an arbitrary $i$) greater or less than its empirical counterpart by more than $\epsilon$. Notice that the restriction of $\mH_i$ to any arbitrary sample $\mS$, i.e. $\mH_i \cap \mS$, may be of infinite size. 
Nevertheless, the payoff function concerns the indicator function $\mI$ only and not the real-valued prediction produced by functions in $\mH_i$; therefore, we now analyze this binary function class.

Let $\mF_i:\mZ \rightarrow \{0,1\}$ such that 
\begin{equation}
\label{eq:defoff}
\mF_i \defeq \left\{\mI(z , h) \mid h\in \mH_i  \right\}.
\end{equation}
Notice that $\abs{\mF_i \cap \mS}$ represents the effective size of $\mH_i \cap \mS$ with respect to the indicator function $\mI$. 

We already know that the pseudo-dimension of $\mH_i$ is $d_i$. In Lemma \ref{lemma:regtoclass} we bind the pseudo-dimension of $\mH_i$ with the VC dimension of $\mF_i$.

\begin{lemma}
\label{lemma:regtoclass}
$\vc(\mF_i) \leq 10d_i$.
\end{lemma}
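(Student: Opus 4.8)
The plan is to write each indicator in $\mF_i$ as a product of two one-sided threshold indicators built from $\mH_i$, bound the VC dimension of each one-sided class by the pseudo-dimension $d_i$, control the growth function of $\mF_i$ by the product of the two growth functions, and then close with Sauer's lemma and elementary arithmetic. First I would rewrite, for $z=(x,y,t)$ and $h\in\mH_i$,
\[
\mI(z,h)=\ind\!\left[\,h(x)-y\le t\,\right]\cdot\ind\!\left[\,h(x)-y\ge -t\,\right]=\ind\!\left[\,h(x)\le y+t\,\right]\cdot\ind\!\left[\,h(x)\ge y-t\,\right],
\]
so that $\mI(z,h)=1$ exactly when both factors equal $1$.

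Next I would introduce the two binary classes $\mathcal A=\{z\mapsto \ind[h(x)\ge y-t]\mid h\in\mH_i\}$ and $\mathcal B=\{z\mapsto \ind[h(x)\le y+t]\mid h\in\mH_i\}$ over the domain $\mZ$, and argue $\vc(\mathcal A),\vc(\mathcal B)\le d_i$. Each is a \emph{thresholded} version of $\mH_i$: for $\mathcal A$, membership at $z=(x,y,t)$ depends only on $\sign(h(x)-r)$ with $r=y-t$. If $\mathcal A$ shatters $\{z_1,\dots,z_m\}$, then choosing the pseudo-shattering thresholds $r_j=y_j-t_j$ shows $\mH_i$ pseudo-shatters $\{x_1,\dots,x_m\}$, whence $m\le\pdim(\mH_i)=d_i$. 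The class $\mathcal B$ is handled identically using $h(x)\le r\iff -h(x)\ge -r$ together with the fact that $\{-h:h\in\mH_i\}$ has the same pseudo-dimension as $\mH_i$.

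I would then bound the growth function. Since the pattern that $\mI(\cdot,h)$ cuts out on a sample $\mS\in\mZ^m$ is determined by the pair of patterns cut out by $\mathcal A$ and $\mathcal B$, we get $\abs{\mF_i\cap \mS}\le \abs{\mathcal A\cap \mS}\cdot\abs{\mathcal B\cap\mS}$ and hence $\Pi_{\mF_i}(m)\le \Pi_{\mathcal A}(m)\,\Pi_{\mathcal B}(m)$. Applying Sauer's lemma to each factor gives $\Pi_{\mathcal A}(m),\Pi_{\mathcal B}(m)\le (em/d_i)^{d_i}$ for $m\ge d_i$, so $\Pi_{\mF_i}(m)\le (em/d_i)^{2d_i}$. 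If $\mF_i$ shatters a set of size $m$, then $2^m\le (em/d_i)^{2d_i}$, i.e. $m\ln 2\le 2d_i\ln(em/d_i)$.

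The closing step is purely arithmetic: the left-hand side is linear in $m$ while the right-hand side is logarithmic, and a direct check at $m=10d_i$ shows the inequality already fails (there $m\ln2\approx 6.93\,d_i$ exceeds $2d_i\ln(10e)\approx 6.60\,d_i$); since the gap is increasing for $m\ge 10 d_i$, no set of size $10d_i$ can be shattered, giving $\vc(\mF_i)\le 10 d_i$. The only genuinely delicate points will be (i) justifying $\vc(\mathcal A),\vc(\mathcal B)\le d_i$ cleanly from the stated pseudo-shattering definition, in particular handling the sign convention at equality $h(x)=r$; and (ii) pinning down the explicit constant $10$ in this transcendental inequality rather than settling for the usual $O(d_i\log d_i)$-type bound that a black-box composition lemma would yield.
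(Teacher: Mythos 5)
Your proposal follows essentially the same route as the paper's proof: the paper likewise factors $\mI(z,h)=\ind_{h(x)\ge y-t}\cdot\ind_{h(x)\le y+t}$, introduces the two one-sided threshold classes $\mG^{\ge},\mG^{\le}$ with $\vc(\mG^{\ge})=\vc(\mG^{\le})=d_i$ (handling the equality case $h(x)=r$ by perturbing the witness to $r^*=\frac{a+r}{2}$, exactly the delicate point you flag), proves $\Pi_{\mF_i}(m)\le\Pi_{\mG^{\ge}}(m)\,\Pi_{\mG^{\le}}(m)$ by an injection, and closes by checking that $\Pi_{\mF_i}(10d_i)<2^{10d_i}$. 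Your final arithmetic via $2^{10}>(10e)^2$ is a slightly more direct version of the paper's binomial-sum computation, but it is the same argument, so the proposal is correct and matches the paper.
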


After discovering the connection between the growth rate of $\mH_i$ and $\mF_i$, we can progress to bounding the growth of the payoff function class $\mF$ (which we will define shortly).

For ease of notation, denote $\mI(z,\bl h)=(\mI(z,h_1),\dots ,\mI(z,h_N))$. Similarly, let $w(z;\bl h)=(w_1(z;\bl h),\dots ,w_N(z;\bl h))$.
Note that there is a bijection $\mI(z,\bl h) \mapsto w(z;\bl h)$, which divides $\mI(z,\bl h)$ by its norm if it is greater than zero or leaves it as is otherwise. Formally, there is a bijection $M$, $M:\{0,1\}^N \rightarrow \{1,\frac{1}{2},\dots,\frac{1}{N},0\}^N$ such that for every $\bl v \in \{0,1\}^N$,
\[
M(\bl v) = 
\begin{cases}
\bl 0 & \text{if } \norm{\bl v}=0\\
\frac{\bl v }{\norm{\bl v}} &\text{otherwise}
\end{cases}.
\]

Let $\mF=\mZ \rightarrow \{0,1 \}^N$, defined by
\[
\mF \defeq \left\{ \mI\left(z,\bl h  \right) \mid \bl h\in \mH\right\}.
\]
Note that every element in $\mF$ is a function from $\mZ$ to $\{0,1\}^N$. The restriction of $\mF$ to a sample $\mS$ is defined by
\[
\mF \cap \mS = \left\{ \left(\mI(z_1,\bl h),\dots,\mI(z_m,\bl h)   \right) \mid \bl h \in \mH   \right\}.
\]
Due to the aforementioned bijection, every element in $\mF \cap \mS$ represents a distinct payoff vector of the empirical game; thus, bounding $\abs{\mF \cap \mS}$ corresponds to bounding the number of distinct strategy profiles in the empirical game. Clearly,
\[
\abs{\mF \cap \mS}=\prod_{i=1}^N \abs{\mF_i \cap \mS}.
\]
The growth function of $\mF$, $\Pi_\mF (m) = \max_{\mS\in \mZ^m    } \abs{ \mF \cap \mS   }$, is therefore bounded as follows.
\begin{lemma}
\label{lemma:growthsumvc}
$\Pi_\mF(m) \leq (em)^{10\sum_{i=1}^N d_i}$.
\end{lemma}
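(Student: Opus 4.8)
The plan is to reduce the bound on $\Pi_\mF$ to a product of single-player growth functions, and then apply the Sauer--Shelah lemma to each factor using Lemma \ref{lemma:regtoclass}. The whole argument is essentially a mechanical combination of three ingredients already available: the product identity for $\abs{\mF\cap\mS}$ stated in the text, the VC bound $\vc(\mF_i)\le 10 d_i$, and monotonicity of the map $v\mapsto (em)^v$.

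First I would observe that the identity $\abs{\mF \cap \mS}=\prod_{i=1}^N \abs{\mF_i \cap \mS}$, established just above the statement, holds for \emph{every} fixed sample $\mS \in \mZ^m$. Since all factors are nonnegative cardinalities, taking the maximum over $\mS$ and using $\max_{\mS}\prod_i a_i(\mS)\le \prod_i \max_{\mS} a_i(\mS)$ yields
\[
\Pi_\mF(m) = \max_{\mS \in \mZ^m}\abs{\mF\cap\mS} \le \prod_{i=1}^N \max_{\mS\in\mZ^m}\abs{\mF_i\cap\mS} = \prod_{i=1}^N \Pi_{\mF_i}(m).
\]

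Next, each $\mF_i$ is a class of binary-valued functions $\mZ \to \{0,1\}$, so the Sauer--Shelah lemma applies directly to it. Writing $v_i \defeq \vc(\mF_i)$, Sauer--Shelah gives $\Pi_{\mF_i}(m) \le \sum_{k=0}^{v_i}\binom{m}{k} \le (em)^{v_i}$, where the last inequality uses the standard estimate $\sum_{k\le v_i}\binom{m}{k}\le (em/v_i)^{v_i}$ together with $v_i \ge 1$. By Lemma \ref{lemma:regtoclass} we have $v_i \le 10 d_i$, and since $em \ge 1$ raising the base to a larger exponent only increases the bound, so $\Pi_{\mF_i}(m) \le (em)^{10 d_i}$. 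Multiplying these per-player estimates gives
\[
\Pi_\mF(m) \le \prod_{i=1}^N (em)^{10 d_i} = (em)^{10\sum_{i=1}^N d_i},
\]
which is exactly the claim.

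The only point that needs a word of care is the Sauer--Shelah estimate in the degenerate small-sample regime $m < v_i$, where $\Pi_{\mF_i}(m)=2^m$ and the bound $(em/v_i)^{v_i}$ is not the relevant form; there one checks $2^m \le (em)^{v_i}$ directly (for $m\ge 1$ one has $em\ge e>2$, so $2^m\le (em)^m\le (em)^{v_i}$ when $m\le v_i$). Beyond this bookkeeping I do not expect a genuine obstacle: the crux of the work is carried by Lemma \ref{lemma:regtoclass} (converting pseudo-dimension of $\mH_i$ into VC dimension of $\mF_i$) and by the product structure of the payoff-indicator class, both of which are supplied before the statement.
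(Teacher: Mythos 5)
Your proof is correct and takes essentially the same route as the paper's: bound $\Pi_\mF(m)$ by $\prod_{i=1}^N \Pi_{\mF_i}(m)$ via the product identity, apply Sauer--Shelah to each $\mF_i$ with $\vc(\mF_i)\le 10d_i$ from Lemma \ref{lemma:regtoclass}, and multiply. If anything you are a little more careful than the paper, which writes the Sauer--Shelah bound with $d_i$ where the exponent should be $\vc(\mF_i)$ and does not comment on the small-sample regime.
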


Next, we bound the probability of a player $i$'s payoff being ``too far'' from its empirical counterpart. The proof of Lemma \ref{lemma:uniconvergenceoneplayer} below goes along the path of \citeauthor{vapnik2015uniform}, introduced in  \cite{vapnik2015uniform}. Since in our case $\mF$ is not a binary function class,  few modifications are needed. 
\begin{lemma} 
\label{lemma:uniconvergenceoneplayer}
Let $m$ be a positive integer, and let $\epsilon>0$. It holds that
\[
\Pr_{\mS\sim \mD^m}\left(\exists \bl{h} : \abs{\pi_i(\bl h)-\pi_i^{\mS}(\bl h)} \geq  \epsilon \right) \leq 4 \Pi_{\mF}(2m)e^{-\frac{\epsilon^2 m}{8}}.
\]
\end{lemma}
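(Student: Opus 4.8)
The plan is to adapt the classical Vapnik--Chervonenkis uniform-convergence argument (the symmetrization plus random-sign / permutation technique) to our vector-valued payoff class $\mF$. The only structural difference from the textbook binary case is that each $w_i(z;\bl h)$ takes values in $\{1,\tfrac12,\dots,\tfrac1N,0\}\subseteq[0,1]$ rather than in $\{0,1\}$, but since $w_i$ is a fixed function of the binary vector $\mI(z,\bl h)$ via the bijection $M$, two profiles agreeing on $\mF\cap\mS$ induce identical empirical payoffs, so the relevant combinatorial complexity is still controlled by $\Pi_{\mF}(m)$ from Lemma~\ref{lemma:growthsumvc}.

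First I would fix the player $i$ and define, for each profile $\bl h$, the deviation $\Delta(\bl h)=\pi_i(\bl h)-\pi_i^{\mS}(\bl h)$. The first step is \emph{symmetrization by a ghost sample}: introduce an independent second sample $\mS'\sim\mD^m$ and show that, provided $m\epsilon^2$ is not too small, the event $\sup_{\bl h}|\pi_i(\bl h)-\pi_i^{\mS}(\bl h)|\ge\epsilon$ is, up to a factor of $2$, dominated by the two-sample event $\sup_{\bl h}|\pi_i^{\mS}(\bl h)-\pi_i^{\mS'}(\bl h)|\ge\epsilon/2$. This replaces the unknown population expectation $\pi_i(\bl h)$ by an empirical quantity over $\mS'$, so that the supremum over $\bl h$ now only sees the finitely many distinct payoff behaviours on the combined sample $\mS\cup\mS'$ of size $2m$.

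Second, I would apply the \emph{random permutation (swapping) argument}: conditioned on the multiset $\mS\cup\mS'$, randomly swap each matched pair $(z_j,z_j')$ with probability $\tfrac12$, and bound the probability of the two-sample event for each \emph{fixed} behaviour by a Hoeffding-type tail inequality over the $\pm1$ swaps. Because $w_i\in[0,1]$, each swap contributes a bounded difference, so Hoeffding gives a bound of order $2e^{-\epsilon^2 m/8}$ for a single behaviour. A union bound over the at most $\Pi_{\mF}(2m)$ distinct behaviours on $\mS\cup\mS'$ then yields the stated $4\Pi_{\mF}(2m)e^{-\epsilon^2 m/8}$.

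The main obstacle is the careful bookkeeping in the swapping step: in the standard proof the summand is an indicator, whereas here it is the real-valued $w_i(z_j;\bl h)-w_i(z_j';\bl h)\in[-1,1]$. I must verify that after conditioning on $\mS\cup\mS'$ the quantity indexed by $\bl h$ really depends only on the value of $\mI(\cdot,\bl h)$ on the $2m$ points---so that the union is over $\Pi_{\mF}(2m)$ terms and not over an infinite set---and that the bounded-differences constant is exactly $1$ per coordinate, which is what produces the constant $8$ in the exponent. Checking that symmetrization survives the move from binary to bounded-real outputs (the variance/centering step where one argues the ghost empirical mean is close to the true mean) is the one place where the ``few modifications'' alluded to in the text must be made precise.
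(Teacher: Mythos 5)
Your proposal is correct and follows essentially the same route as the paper's proof: symmetrization via a ghost sample (with the variance/Chebyshev-type centering step you flag), the swap-permutation reduction to at most $\Pi_{\mF}(2m)$ distinct behaviours determined by $\mI(\cdot,\bl h)$ on the combined sample, and Hoeffding applied to the bounded real-valued differences $w_i(z_j;\bl h)-w_i(z_{j+m};\bl h)\in[-1,1]$, yielding $2\cdot\Pi_{\mF}(2m)\cdot 2e^{-\epsilon^2 m/8}$. The two places you identify as needing care are exactly the two modifications the paper makes to the classical binary argument.
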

The following Theorem \ref{thm:unionbound} bounds the probability that any player $i$ has a difference greater than $\epsilon$ between its payoff and its empirical payoff (over the selection of a sample $\mS$), uniformly over all possible strategy profiles. This is done by simply applying the union bound on the bound already obtained in Lemma \ref{lemma:uniconvergenceoneplayer}.
\begin{theorem}
\label{thm:unionbound}
Let $m$ be a positive integer, and let $\epsilon>0$. It holds that
\begin{equation}
\label{eq:inthmunionbound}
\Pr_{\mS\sim \mD^m}\left(\exists i\in[N] : \sup_{\bl h \in \mH}\abs{\pi_i(\bl h)-\pi_i^{\mS}(\bl h)} \geq  \epsilon \right) \leq 4N (2em)^{10\sum_{i=1}^N d_i}e^{-\frac{\epsilon^2 m}{8}}.
\end{equation}
\end{theorem}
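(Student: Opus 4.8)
The plan is to obtain Theorem \ref{thm:unionbound} as an immediate corollary of the two preceding lemmas, combined through a union bound over the players; as the surrounding text indicates, all of the genuine work has already been carried out in Lemma \ref{lemma:uniconvergenceoneplayer} and Lemma \ref{lemma:growthsumvc}, and what remains is bookkeeping. Concretely, I would first fix an arbitrary player $i\in[N]$ and record the single-player guarantee of Lemma \ref{lemma:uniconvergenceoneplayer}, namely that the probability of the deviation event for player $i$ is at most $4\Pi_{\mF}(2m)e^{-\epsilon^2 m/8}$. Here I would note the one minor technical point: Lemma \ref{lemma:uniconvergenceoneplayer} is phrased for the event ``there exists $\bl h$ with $\abs{\pi_i(\bl h)-\pi_i^{\mS}(\bl h)}\geq\epsilon$'', which coincides (up to the standard measurability caveat, and since the empirical payoff ranges over the finitely many classes induced by $\mF\cap\mS$) with the event $\sup_{\bl h\in\mH}\abs{\pi_i(\bl h)-\pi_i^{\mS}(\bl h)}\geq\epsilon$ appearing in the theorem, so the per-player bound applies verbatim to the supremum formulation.

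Next I would substitute the growth-function estimate. Evaluating the bound of Lemma \ref{lemma:growthsumvc} at the doubled sample size gives
\[
\Pi_{\mF}(2m)\leq (e\cdot 2m)^{10\sum_{i=1}^N d_i}=(2em)^{10\sum_{i=1}^N d_i},
\]
so that the per-player deviation probability is at most $4(2em)^{10\sum_{i=1}^N d_i}e^{-\epsilon^2 m/8}$. Finally I would take a union bound over the $N$ players: since
\[
\left\{\exists i\in[N]:\sup_{\bl h\in\mH}\abs{\pi_i(\bl h)-\pi_i^{\mS}(\bl h)}\geq\epsilon\right\}=\bigcup_{i=1}^N\left\{\sup_{\bl h\in\mH}\abs{\pi_i(\bl h)-\pi_i^{\mS}(\bl h)}\geq\epsilon\right\},
\]
subadditivity of probability multiplies the single-player bound by $N$, yielding exactly the claimed right-hand side $4N(2em)^{10\sum_{i=1}^N d_i}e^{-\epsilon^2 m/8}$.

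I do not expect any real obstacle at this stage, precisely because the two hard ingredients are assumed: the single-player tail bound (whose proof follows the Vapnik--Chervonenkis symmetrization argument adapted to the non-binary class $\mF$) and the product growth bound $\abs{\mF\cap\mS}=\prod_{i}\abs{\mF_i\cap\mS}$ that feeds Lemma \ref{lemma:growthsumvc}. The only points deserving a line of care are the identification of the ``exists'' event with the ``supremum'' event noted above, and remembering to evaluate $\Pi_{\mF}$ at $2m$ rather than $m$ (the factor $2$ is what turns $(em)$ into $(2em)$ inside the bound). Beyond these, the theorem is a one-line union bound.
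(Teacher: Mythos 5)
Your proposal is correct and matches the paper's own proof exactly: the paper likewise obtains the theorem by applying the union bound over the $N$ players to the per-player inequality of Lemma \ref{lemma:uniconvergenceoneplayer} and substituting $\Pi_{\mF}(2m)\leq(2em)^{10\sum_{i=1}^N d_i}$ from Lemma \ref{lemma:growthsumvc}. Your added remarks about evaluating the growth function at $2m$ and identifying the existential event with the supremum event are sound but not needed beyond what the paper states.
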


\subsection{Existence of a PNE in empirical games}
\label{subsec:dyn}
In the previous subsection we bounded the probability of a payoff vector being too far from its counterpart in the empirical game. Notice, however, that this result implies nothing about the existence of a PNE or an approximate PNE: for a fixed $\mS$, even if $\sup_{\bl h \in \mH}\abs{\pi_i(\bl h)-\pi_i^{\mS}(\bl h)} < \epsilon$ holds for every $i$, a player may still have a beneficial deviation. Therefore, the results of the previous subsection are only meaningful if we show that there exists a PNE in the empirical game, which is the goal of this subsection.
We prove this existence using the notion of {\em potential games} \cite{monderer1996potential}.

A non-cooperative game is called a potential game if there exists a function $\Phi:\mH \rightarrow\mathbb R$ such that for every strategy profile $\bl h=(h_1,\dots,h_N) \in \mH$ and every $i\in [N]$, whenever player $i$ switches from $h_i$ to a strategy $h_i'\in \mH_i$, the change in her payoff function equals the change in the potential function, i.e.
\[
\Phi (h'_{{i}},\bl h_{{-i}})-\Phi (h_{{i}},\bl h_{{-i}})=\pi_{{i}}(h'_{{i}},\bl h_{{-i}})-\pi_{{i}}(h_{{i}},\bl h_{{-i}}).
\]
\begin{theorem}[\cite{monderer1996potential,rosenthal1973class}] 
\label{thm:pot}
Every potential game with a finite strategy space possesses at least one PNE.
\end{theorem}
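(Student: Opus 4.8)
The plan is to exploit the defining property of a potential game in order to reduce the search for a PNE to a pure optimization problem. The crucial observation is that the potential function $\Phi:\mH\to\R$ tracks every player's unilateral payoff change exactly; consequently, any profile that cannot be improved upon in the value of $\Phi$ by a single-player deviation must already be an equilibrium. Since a finite strategy space guarantees that $\Phi$ attains a global maximum, such a profile is guaranteed to exist.

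Concretely, I would first invoke finiteness: because $\mH=\mH_1\times\cdots\times\mH_N$ is a finite set and $\Phi$ is a real-valued function defined on it, $\Phi$ achieves its maximum. I would then fix any maximizer $\bl h^\star\in\argmax_{\bl h\in\mH}\Phi(\bl h)$ and argue that $\bl h^\star$ is a PNE by contradiction. Suppose some player $i$ had a beneficial deviation $h_i'\in\mH_i$, that is $\pi_i(h_i',\bl h^\star_{-i})>\pi_i(\bl h^\star)$. Applying the potential identity to this deviation yields
\[
\Phi(h_i',\bl h^\star_{-i})-\Phi(\bl h^\star)=\pi_i(h_i',\bl h^\star_{-i})-\pi_i(\bl h^\star)>0,
\]
so $\Phi(h_i',\bl h^\star_{-i})>\Phi(\bl h^\star)$, contradicting the maximality of $\bl h^\star$. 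Hence no player possesses a beneficial deviation under $\bl h^\star$, which means that every player is playing a best response and $\bl h^\star$ is a PNE.

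I do not anticipate a genuine technical obstacle here: the entire content of the argument is the elementary but elegant reduction from equilibrium existence to the maximization of a single scalar function, made possible by the potential identity. The one point that warrants care is the role of the finiteness hypothesis. On an infinite strategy space $\Phi$ need not attain its supremum, and in that case a maximizer $\bl h^\star$ may fail to exist and the argument collapses; this is precisely why the statement is restricted to finite strategy spaces. For the application in this paper, the relevant strategy spaces in the empirical game are the restrictions $\mF_i\cap\mS$, whose finiteness is exactly what Lemmas \ref{lemma:regtoclass} and \ref{lemma:growthsumvc} certify, so the hypothesis will be met there.
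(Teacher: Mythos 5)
Your argument is correct and is exactly the canonical maximizer-of-the-potential argument from the cited sources (\cite{monderer1996potential,rosenthal1973class}); the paper itself states Theorem \ref{thm:pot} as a known result and does not reprove it. Your closing remark about why the hypothesis is satisfied for the empirical game — finitely many payoff-distinct profiles via the restrictions $\mF_i\cap\mS$ — also matches how the paper justifies applying the theorem.
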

Obviously, in our setting the strategy space of a game instance $\langle \mZ,\mD,\mN,(\mH_i)_{i\in \mN} \rangle$ is typically infinite.
Infinite potential games may also possess a PNE (as discussed in \cite{monderer1996potential}), but in our case the distribution $\mD$ is approximated from samples and the empirical game is finite, so no stronger claims are needed.

Lemma \ref{lemma:PNEexistence} below shows that every empirical game is a potential game.
\begin{lemma}
\label{lemma:PNEexistence}
Every empirical game  $\langle \mZ,\mS \sim \mD^m ,\mN,(\mH_i)_{i\in \mN} \rangle$ has a potential function.
\end{lemma}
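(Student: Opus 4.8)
The plan is to recognize the empirical game as a congestion game in disguise and then exhibit Rosenthal's potential explicitly. The key observation is that although each strategy $h_i$ is a real-valued function, its effect in the empirical game depends only on the binary vector $(\mI(z_1,h_i),\dots,\mI(z_m,h_i))$, i.e. on which of the $m$ sampled users player $i$ satisfies. I would therefore treat each example $z_j$ as a \emph{resource}, and say that player $i$ uses resource $z_j$ under $\bl h$ precisely when $\mI(z_j,h_i)=1$. Writing $n_j(\bl h)=\sum_{i'=1}^N \mI(z_j,h_{i'})$ for the number of players who satisfy $z_j$, each such player collects a reward of $\frac{1}{m}\cdot\frac{1}{n_j(\bl h)}$ from that user, a quantity that depends only on the load $n_j(\bl h)$ and not on the identity of the player. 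This is exactly the congestion-game structure, with per-resource reward function $c(k)=1/(mk)$.

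Guided by Rosenthal's construction, I would define the potential
\[
\Phi(\bl h)\defeq \frac{1}{m}\sum_{j=1}^m \sum_{k=1}^{n_j(\bl h)}\frac{1}{k},
\]
with the convention that an empty inner sum (when $n_j(\bl h)=0$) equals zero, and then verify the exact-potential identity directly. Fixing a player $i$, a deviation from $h_i$ to $h_i'$, and the opponents' profile $\bl h_{-i}$, I would decompose both $\Phi$ and $\pi_i^{\mS}$ as sums over the $m$ examples and compare them term by term.

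The core of the argument is a short case analysis on a single example $z_j$, parameterized by $k=\sum_{i'\neq i}\mI(z_j,h_{i'})$, the number of opponents who satisfy $z_j$. If $i$'s membership in resource $z_j$ is unchanged by the deviation (she satisfies $z_j$ either both times or neither time), then $n_j$ is unchanged, so both the $z_j$-term of $\Phi$ and the $z_j$-contribution to $\pi_i^{\mS}$ are unaltered. If the deviation makes $i$ newly satisfy $z_j$, the load rises from $k$ to $k+1$, the $z_j$-term of $\Phi$ increases by exactly $\frac{1}{m}\cdot\frac{1}{k+1}$, and $i$'s payoff contribution from $z_j$ rises from $0$ to $\frac{1}{m}\cdot\frac{1}{k+1}$ --- the same quantity. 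The case where $i$ stops satisfying $z_j$ is symmetric, with both changes equal to $-\frac{1}{m}\cdot\frac{1}{k+1}$. Summing these matched per-example increments yields $\Phi(h_i',\bl h_{-i})-\Phi(h_i,\bl h_{-i})=\pi_i^{\mS}(h_i',\bl h_{-i})-\pi_i^{\mS}(h_i,\bl h_{-i})$, which is the defining property of a potential function.

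I do not anticipate a serious obstacle; the only point demanding care is that the reward a deviating player gains or loses on example $z_j$ is governed by the load $k+1$ \emph{including herself}, so that it matches the telescoping increment of the harmonic-type potential rather than the load $k$ of the opponents alone. The finiteness of $\mS$ is what makes the construction legitimate: even though $\mH_i$ is infinite, only the finitely many induced membership patterns matter, so every sum defining $\Phi$ is finite and Theorem \ref{thm:pot} applies verbatim to conclude existence of an empirical PNE.
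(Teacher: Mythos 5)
Your proposal is correct and uses exactly the same potential as the paper, namely the Rosenthal-type harmonic sum $\Phi(\bl h)=\frac{1}{m}\sum_{j=1}^m\sum_{k=1}^{n_j(\bl h)}\frac{1}{k}$ over the per-example loads; your per-example case analysis is just a cleaner, more explicit verification of the same telescoping identity the paper establishes algebraically.
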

As an immediate result of Theorem \ref{thm:pot} and Lemma \ref{lemma:PNEexistence}, 
\begin{corollary}
\label{corollary:PNEexistence}
Every empirical game  $\langle \mZ,\mS \sim \mD^m ,\mN,(\mH_i)_{i\in \mN} \rangle$ possesses at least one PNE.
\end{corollary}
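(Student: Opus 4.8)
The plan is to deduce the corollary from Lemma \ref{lemma:PNEexistence} together with Theorem \ref{thm:pot}, the only wrinkle being that Theorem \ref{thm:pot} demands a \emph{finite} strategy space whereas each $\mH_i$ in the empirical game is typically infinite. The first thing I would do is observe that in the empirical game the payoff $\pi_i^{\mS}(\bl h)$ depends on $\bl h$ only through the vector $\left(\mI(z_1,\bl h),\dots,\mI(z_m,\bl h)\right)$, i.e. through the element of $\mF\cap\mS$ that $\bl h$ realizes; this is precisely what the bijection $M$ and the definition of $\mF$ are set up to capture. Consequently, two strategies $h_i,h_i'\in\mH_i$ that agree on the indicator pattern over the sample, $\mI(z_j,h_i)=\mI(z_j,h_i')$ for all $j\in[m]$, are interchangeable from the standpoint of every player's empirical payoff.

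Next I would form the payoff-equivalence quotient. For each player $i$, declare $h_i\sim h_i'$ whenever they induce the same indicator pattern on $\mS$, pick one representative from each class, and let $\tilde{\mH}_i\subseteq\mH_i$ be the resulting set. Finiteness is guaranteed because $\abs{\tilde{\mH}_i}=\abs{\mF_i\cap\mS}\le\Pi_{\mF_i}(m)<\infty$, which is finite by the VC bound of Lemma \ref{lemma:regtoclass} (via Sauer's lemma, exactly as already invoked in Lemma \ref{lemma:growthsumvc}). The game restricted to $\tilde{\mH}=\bigtimes_{i}\tilde{\mH}_i$ is finite and, by Lemma \ref{lemma:PNEexistence}, admits the same potential function $\Phi$ restricted to $\tilde{\mH}$; hence Theorem \ref{thm:pot} yields a PNE $\bl h^{\ast}\in\tilde{\mH}$ of the restricted game.

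Finally I would lift $\bl h^{\ast}$ back to the full empirical game. Suppose for contradiction that some player $i$ had a profitable deviation $h_i'\in\mH_i$ against $\bl h^{\ast}_{-i}$; since $h_i'$ is payoff-equivalent on $\mS$ to its representative $\tilde h_i\in\tilde{\mH}_i$, the identical deviation is available inside $\tilde{\mH}_i$ and is equally profitable, contradicting that $\bl h^{\ast}$ is a PNE of the restricted game. Because the empirical payoffs take only finitely many values, the supremum in the definition of a best response is attained, so no subtlety about unachieved suprema arises. The only genuine obstacle is reconciling the infinite strategy spaces with the finiteness hypothesis of Theorem \ref{thm:pot}; once the payoff-equivalence quotient collapses $\mH_i$ to the finite class $\mF_i\cap\mS$, existence of a PNE is immediate from the potential-game machinery.
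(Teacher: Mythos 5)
Your proof is correct and follows the same route as the paper, which derives the corollary immediately from Lemma \ref{lemma:PNEexistence} and Theorem \ref{thm:pot} on the grounds that ``the empirical game is finite.'' Your payoff-equivalence quotient (collapsing each $\mH_i$ to the finitely many indicator patterns in $\mF_i\cap\mS$, which is finite already because $\mF_i\cap\mS\subseteq\{0,1\}^m$, even without the VC bound) and the lifting step simply make explicit the finiteness reduction that the paper leaves implicit.
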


After establishing the existence of a PNE in the empirical game, we are interested in the rate with which it can be ``learnt''. More formally, we are interested in the convergence rate of the dynamics between the players, where at every step one player deviates to one of her $\epsilon$-better responses. Such dynamics do not necessarily converge in general games, but do converge in potential games. By examining the specific potential function in our class of (empirical) games, we can also bound the number of steps until convergence.

\begin{lemma}
\label{lemma:betterconverge}
Let  $\langle \mZ,\mS \sim \mD^m ,\mN,(\mH_i)_{i\in \mN} \rangle$ be any empirical game instance. After at most $O\left(\frac{\log N}{\epsilon}  \right)$ iterations of any $\epsilon$-better-response dynamics, an $\epsilon$-PNE is obtained.
\end{lemma}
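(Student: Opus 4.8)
The plan is to exhibit an explicit bounded potential function for the empirical game, show that each $\epsilon$-better-response step increases it by at least a fixed amount, and then divide the total possible increase by the per-step increment to bound the number of iterations. By Lemma \ref{lemma:PNEexistence} the empirical game is a potential game, so a potential $\Phi$ exists; the key is that the \emph{specific} potential arising from our payoff structure is uniformly bounded. First I would recall that in the empirical game each point $z_j$ contributes its single monetary unit split equally among the players who satisfy it, so the total payoff mass $\sum_{i=1}^N \pi_i^{\mS}(\bl h)$ at any profile equals $\frac{1}{m}\sum_{j=1}^m \ind\{\exists i : \mI(z_j,h_i)=1\} \leq 1$. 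This caps how much ``room'' the dynamics has to climb. The natural potential here is of the congestion-game/Rosenthal type: writing $c_j(\bl h)=\sum_{i'=1}^N \mI(z_j,h_{i'})$ for the number of players satisfying point $z_j$, the potential should take the harmonic form $\Phi(\bl h)=\frac{1}{m}\sum_{j=1}^m \sum_{k=1}^{c_j(\bl h)}\frac{1}{k} = \frac{1}{m}\sum_{j=1}^m H_{c_j(\bl h)}$, where $H_k$ is the $k$-th harmonic number; this is exactly the potential that Lemma \ref{lemma:PNEexistence} produces, since adding a player to a point paying $\frac{1}{k}$ changes the potential by $\frac{1}{k}$, matching the payoff change.

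Next I would bound $\Phi$. Since each point is satisfied by at most $N$ players, $H_{c_j(\bl h)} \leq H_N \leq 1+\ln N = O(\log N)$, so $0 \leq \Phi(\bl h) \leq O(\log N)$ uniformly over all profiles $\bl h \in \mH$. The crucial mechanism is then that $\Phi$ is an \emph{exact} potential: whenever a player performs an $\epsilon$-better response, moving from $h_i$ to $h_i'$ with $\pi_i^{\mS}(h_i',\bl h_{-i}) \geq \pi_i^{\mS}(\bl h)+\epsilon$, the defining property of a potential game gives
\[
\Phi(h_i',\bl h_{-i})-\Phi(h_i,\bl h_{-i}) = \pi_i^{\mS}(h_i',\bl h_{-i})-\pi_i^{\mS}(h_i,\bl h_{-i}) \geq \epsilon.
\]
Thus every step of any $\epsilon$-better-response dynamics raises $\Phi$ by at least $\epsilon$. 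Because $\Phi$ is confined to an interval of length $O(\log N)$ and is strictly monotone increasing along the dynamics, the number of steps before no player has an $\epsilon$-better response—i.e. before an $\epsilon$-PNE is reached—is at most $\frac{O(\log N)}{\epsilon}=O\!\left(\frac{\log N}{\epsilon}\right)$, which is the claimed bound. The termination profile is an $\epsilon$-PNE precisely because the dynamics halts exactly when no $\epsilon$-better response exists for any player.

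The main obstacle, and the step to get right, is verifying that the correct potential is the harmonic (Rosenthal) one and that it is the same $\Phi$ guaranteed by Lemma \ref{lemma:PNEexistence}, rather than some ad hoc bounded function; the boundedness argument and the per-step increment then follow cleanly. I would double-check the harmonic bookkeeping: when player $i$ switches and thereby changes the satisfaction status of a point $z_j$, her own marginal payoff change on that point is $\frac{1}{c_j}$ (if she newly satisfies it and it had $c_j-1$ others) or $-\frac{1}{c_j}$ symmetrically, and this must coincide exactly with the increment $H_{c_j}-H_{c_j-1}=\frac{1}{c_j}$ of the per-point harmonic term—summed over all points whose status she flips. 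Confirming this exact cancellation across simultaneously-affected points is the delicate part; once it holds, the $O(\log N)$ ceiling on $\Phi$ together with the $\epsilon$ floor on each increment immediately yields the iteration bound.
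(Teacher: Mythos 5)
Your proposal is correct and matches the paper's proof essentially step for step: you use the same harmonic (Rosenthal-type) potential $\Phi(\bl h)=\frac{1}{m}\sum_{j=1}^m\sum_{k=1}^{c_j(\bl h)}\frac{1}{k}$ from Lemma \ref{lemma:PNEexistence}, bound it by $1+\ln N$, and observe that each $\epsilon$-better-response step raises it by at least $\epsilon$, giving the $O\left(\frac{\log N}{\epsilon}\right)$ iteration bound. No gaps.
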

\subsection{Learning $\epsilon$-PNE with high probability}
\label{subsec:alg}
In this subsection we leverage the results of the previous Subsections \ref{subsec:sample} and \ref{subsec:dyn} to devise Algorithm \ref{algorithm:betterres}, which runs in polynomial time and returns an approximate equilibrium with high probability. More precisely, we show that Algorithm \ref{algorithm:betterres} returns an $\epsilon$-PNE with probability of at least $1-\delta$, and has time complexity of $\text{poly}\left(\frac{1}{\epsilon},m,N,\log\left( \frac{1}{\delta}\right),d \right)$. As in the previous subsections, we denote $d=\sum_{i=1}^N d_i $.

First, we bound the required sample size. Using standard algebraic manipulations on Equation (\ref{eq:inthmunionbound}), we obtain the following.
\begin{lemma}
\label{lemma:deltaandm}
Let $\epsilon,\delta \in (0,1)$, and let
\begin{equation}
\label{eq:lemma:deltaandm}
m \geq \frac{320d}{\epsilon^2} \log\left( \frac{160d}{\epsilon^2} \right) +\frac{160d\log(2e)}{\epsilon^2}+\frac{16}{\epsilon^2}\log\left( \frac{4N}{\delta} \right).
\end{equation}
With probability of at least $1-\delta$ over all possible samples $\mS$ of size $m$, it holds that
\[
\forall i\in [N] : \sup_{\bl h \in \mH}\abs{\pi_i(\bl h)-\pi_i^{\mS}(\bl h)} <  \epsilon.
\]
\end{lemma}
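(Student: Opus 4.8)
The plan is to reduce the statement to the tail bound of Theorem~\ref{thm:unionbound} and then solve the resulting inequality for $m$. The event in the lemma is the complement of the event bounded in \eqref{eq:inthmunionbound}, so it suffices to exhibit a threshold on $m$ that forces the right-hand side of \eqref{eq:inthmunionbound} below $\delta$; that is, writing $d=\sum_{i=1}^N d_i$, I would find $m$ for which
\[
4N(2em)^{10d}e^{-\frac{\epsilon^2 m}{8}}\le \delta .
\]
Taking logarithms turns this into $\frac{\epsilon^2 m}{8}\ge \log\frac{4N}{\delta}+10d\log(2em)$, an inequality in which $m$ appears both linearly on the left and inside a logarithm on the right.

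To decouple these two appearances I would split the decay in half, $\frac{\epsilon^2 m}{8}=\frac{\epsilon^2 m}{16}+\frac{\epsilon^2 m}{16}$, and impose one sufficient condition on each piece. The first piece is asked to dominate the polynomial growth factor, $\frac{\epsilon^2 m}{16}\ge 10d\log(2em)$, equivalently $m\ge A\log(2em)$ with $A=\frac{160d}{\epsilon^2}$; the second is asked to absorb the union-bound constant, $\frac{\epsilon^2 m}{16}\ge \log\frac{4N}{\delta}$, equivalently $m\ge \frac{16}{\epsilon^2}\log\frac{4N}{\delta}$, which is exactly the last summand of \eqref{eq:lemma:deltaandm}. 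If both conditions hold then $(2em)^{10d}e^{-\epsilon^2 m/16}\le 1$ and $4Ne^{-\epsilon^2 m/16}\le\delta$, so multiplying recovers the displayed bound. Because each of the three conditions is monotone in $m$ and the stated lower bound on $m$ exceeds every individual threshold, all three hold simultaneously.

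The only genuine difficulty is the self-referential inequality $m\ge A\log(2em)=A\log m+A\log(2e)$, where $m$ is hidden inside the logarithm. I would eliminate the self-reference with the elementary tangent-line estimate $\log x\le \beta x-1-\log\beta$, valid for every $\beta>0$ and tight at $\beta=1/x$. Choosing $\beta$ proportional to $1/A$ converts the logarithmic term into a term linear in $m$ with coefficient strictly below $1$ plus a residual of order $A\log A$; rearranging then shows the condition is implied by a bound of the shape $m\ge 2A\log A+A\log(2e)+O(A)$. Recognizing $2A=\frac{320d}{\epsilon^2}$, $\log A=\log\frac{160d}{\epsilon^2}$, and $A\log(2e)=\frac{160d\log(2e)}{\epsilon^2}$ reproduces the first two summands of \eqref{eq:lemma:deltaandm}. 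I expect this transcendental step to be the crux; everything else is the routine bookkeeping of constants, namely tuning the split and the value of $\beta$ so that the $O(A)$ slack is swallowed by the stated expression.
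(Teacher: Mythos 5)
Your reduction is the same as the paper's: both start from Theorem~\ref{thm:unionbound}, demand $4N(2em)^{10d}e^{-\epsilon^2 m/8}\le\delta$, and are left with a condition of the form $m\ge a\log m+b$. The only divergence is in how that transcendental condition is discharged. The paper does not split the exponent: it writes the requirement as $m\ge \frac{80d}{\epsilon^2}\log m+\bigl(\frac{80d\log(2e)}{\epsilon^2}+\frac{8}{\epsilon^2}\log\frac{4N}{\delta}\bigr)$ and invokes a standard lemma of Shalev-Shwartz and Ben-David (if $a\ge 1$, $b>0$ and $m\ge 4a\log(2a)+2b$, then $m\ge a\log m+b$) with $a=\frac{80d}{\epsilon^2}$; the quantity $4a\log(2a)+2b$ is then literally the right-hand side of \eqref{eq:lemma:deltaandm}, constants included. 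Your route re-derives such a lemma by hand via the tangent-line bound $\log x\le\beta x-1-\log\beta$, which is sound in principle, but the equal split of the exponent doubles the effective coefficient to $A=\frac{160d}{\epsilon^2}$, and the natural choice $\beta=1/(2A)$ gives the sufficient condition $m\ge 2A\log(4A)=2A\log A+4A\log 2$, which exceeds the first two summands $2A\log A+A\log(2e)$ of \eqref{eq:lemma:deltaandm} by $A(3\log 2-1)\approx 1.08\,A$. So the ``routine bookkeeping'' you defer is not entirely free: to land on the stated constants you must either exploit $A=\frac{160d}{\epsilon^2}\ge 160$ (so $\log A\ge 5$, and a slightly smaller $\beta$ trades coefficient on $\log A$ against the residual) or, more cleanly, skip the split and keep the full $\epsilon^2 m/8$ against $10d\log(2em)+\log\frac{4N}{\delta}$ as the paper does. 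With that adjustment your argument closes; as written, the specific parameter choices you name do not quite reproduce \eqref{eq:lemma:deltaandm}.
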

Given $\epsilon,\delta$, we denote by $m_{\epsilon,\delta}$ the minimal integer $m$ satisfying Equation (\ref{eq:lemma:deltaandm}). Lemma \ref{lemma:deltaandm} shows that $m_{\epsilon,\delta}=O\left(\frac{d}{\epsilon^2}\log\left( \frac{d}{\epsilon^2} \right) + \frac{1}{\epsilon^2}\log\left( \frac{N}{\delta} \right) \right)$ are enough samples to have all empirical payoff vectors $\epsilon$-close to their theoretic counterpart coordinate-wise (i.e. $L^\infty$ norm), with a probability of at least $1-\delta$. 

Next, we bind an approximate PNE in the empirical game with an approximate PNE in the (actual) game.
\begin{lemma}
\label{lemma:empiseq}
Let  $m \geq m_{\frac{\epsilon}{4},\delta}$ and let $\bl h$ be an $\frac{\epsilon}{2}$-PNE in  $\langle \mZ,\mS \sim \mD^m ,\mN,(\mH_i)_{i\in \mN} \rangle$. Then $\bl h$ is an $\epsilon$-PNE with probability of at least $1-\delta$.
\end{lemma}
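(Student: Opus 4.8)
The plan is to combine the uniform-convergence guarantee of Lemma \ref{lemma:deltaandm} (instantiated at accuracy $\frac{\epsilon}{4}$) with the defining inequality of an $\frac{\epsilon}{2}$-PNE of the empirical game, via a standard three-term telescoping argument. First I would invoke Lemma \ref{lemma:deltaandm} with $\frac{\epsilon}{4}$ in place of $\epsilon$: since $m \geq m_{\frac{\epsilon}{4},\delta}$, with probability at least $1-\delta$ over the draw of $\mS$ the event
\[
G \defeq \left\{ \forall i \in [N] : \sup_{\bl h' \in \mH} \abs{\pi_i(\bl h') - \pi_i^{\mS}(\bl h')} < \tfrac{\epsilon}{4} \right\}
\]
occurs. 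The crucial feature is that $G$ is a \emph{single} event holding uniformly over the whole profile space $\mH$; this is precisely what lets us apply the bound both to the equilibrium profile $\bl h$ and to every one of its deviations $(h_i',\bl h_{-i})$, even though $\bl h$ itself is a random object determined by the sample.

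Second, I would condition on $G$ and fix an arbitrary player $i$ and an arbitrary deviation $h_i' \in \mH_i$. Chaining the inequalities---pass from the true payoff $\pi_i(h_i',\bl h_{-i})$ to its empirical value (losing $\frac{\epsilon}{4}$), use that $\bl h$ is an $\frac{\epsilon}{2}$-PNE of the empirical game to bound the empirical deviation payoff by $\pi_i^{\mS}(\bl h) + \frac{\epsilon}{2}$, and then pass back from $\pi_i^{\mS}(\bl h)$ to the true payoff $\pi_i(\bl h)$ (losing another $\frac{\epsilon}{4}$)---yields
\[
\pi_i(h_i',\bl h_{-i}) < \pi_i(\bl h) + \tfrac{\epsilon}{4} + \tfrac{\epsilon}{2} + \tfrac{\epsilon}{4} = \pi_i(\bl h) + \epsilon.
\]
Taking the supremum over $h_i' \in \mH_i$ and then over all $i$ gives exactly the $\epsilon$-PNE condition $\pi_i(\bl h) \geq \sup_{h_i'\in\mH_i}\pi_i(h_i',\bl h_{-i}) - \epsilon$, which establishes the claim on the event $G$ of probability at least $1-\delta$.

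There is essentially no hard step here; the argument is the familiar observation that an approximate equilibrium of a game whose payoffs are uniformly close to those of a second game is itself an approximate equilibrium of the second game, with the approximation parameters adding up. The one point that requires care---and the reason the uniform bound of Lemma \ref{lemma:deltaandm}, rather than a single-profile concentration bound, is needed---is that the empirical equilibrium $\bl h$ and the comparison deviations depend on the \emph{same} random sample $\mS$; a per-profile bound could fail on exactly the data-dependent profiles we must control. The accounting of constants is also why the sample size is taken at accuracy $\frac{\epsilon}{4}$: this leaves room for the two $\frac{\epsilon}{4}$ conversion losses on top of the $\frac{\epsilon}{2}$ slack inherited from the empirical equilibrium.
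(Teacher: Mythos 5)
Your proof is correct and follows essentially the same route as the paper's: both instantiate the uniform-convergence bound at accuracy $\frac{\epsilon}{4}$ and use the three-term decomposition $\pi_i(h_i',\bl h_{-i}) \le \pi_i^{\mS}(h_i',\bl h_{-i})+\frac{\epsilon}{4} \le \pi_i^{\mS}(\bl h)+\frac{3\epsilon}{4} \le \pi_i(\bl h)+\epsilon$, the only cosmetic difference being that the paper phrases the argument by bounding the probability of the bad event rather than conditioning on the good one.
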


Recall that Lemma \ref{lemma:betterconverge} ensures that every $O\left(\frac{\log N}{\epsilon} \right)$ iterations of any $\epsilon$-better-response dynamics must converge to an $\epsilon$-PNE of the empirical game. In each such iteration a player calls her approximate better-response oracle, which is assumed to run in $\text{poly}(\frac{1}{\epsilon},m,N)$ time. Altogether, given $\epsilon$ and $\delta$, Algorithm \ref{algorithm:betterres} runs in $\text{poly}\left(\frac{1}{\epsilon},N,\log\left( \frac{1}{\delta}\right),d \right)$ time, and returns an $\epsilon$-PNE with probability of at least $1-\delta$.

\begin{algorithm}[t]
\DontPrintSemicolon
\caption{ Approximate PNE w.h.p. via better-response dynamics\label{algorithm:betterres}}
\KwIn{$\delta, \epsilon \in (0,1)$
}
\KwOut{a strategy profile $\bl h$}
set $m = m_{\frac{\epsilon}{2},\delta}$  \tcp*{the minimal integer $m$ satisfying Equation (\ref{eq:lemma:deltaandm})}
sample $\mS$ from $\mD^m$\;
execute any $\frac{\epsilon}{2}$-better-response dynamics on $\mS$ until convergence, and obtain a strategy profile $\bl h$ that is an empirical $\frac{\epsilon}{2}$-PNE \;
\Return{$\bl h$}
\end{algorithm}

\section{Learnability in games with infinite dimension}
\label{sec:exampleandinf}


While Lemma \ref{lemma:regtoclass} upper bounds $\vc(\mF_i)$ as a function of $\pdim(\mH_i)$, it is fairly easy to show that $\vc(\mF_i)\geq \pdim(\mH_i) $ ({\ifnum\Includeappendix=1{see Claim \ref{claim:flowerbound} in the appendix}\else{we prove this claim formally in the appendix}\fi}). Therefore, if $\pdim(\mH_i)$ is infinite, so is $\vc(\mF_i)$.

Classical results in learning theory suggest that if $\pdim(\mF_i)=\infty$, a best response on the sample may not generalize to an approximate best response w.h.p. To see this, imagine a ``game'' with one player, who seeks to maximize her payoff function. No Free Lunch Theorems (see, e.g., \cite{wolpert1997no}) imply that with a constant probability the player cannot get her payoff within a constant distance from the optimal payoff. We conclude that in general games, if a player has a strategy space with an infinite pseudo-dimension, she may not be able to learn. However, in the presence of such a player, can other players with a finite pseudo-dimension learn an approximate best-response?


One typically shows non-learnability by constructing two distributions and proving that with constant probability an agent cannot tell which distribution produced the sample she obtained. These two distributions are constructed to be distant enough from each other, so the loss (or payoff in our setting) is far from optimal by  at least a constant. In our setting, however, players are interacting with each other, and player payoffs are a function of the whole strategy profile; thus, interesting phenomena occur even if the distribution $\mD$ is known. In particular,  Example \ref{example:unlearnability} below demonstrates that in the infinite dimension case, not every empirical PNE is generalized to an approximate PNE with high probability. 

\begin{example}
\label{example:unlearnability}
Let $\mD$ be a density function over $\mZ=[0,2] \times \{0,1\} \times \left\{  \frac{1}{2}\right\}$ as follows:
\[
\mD(x,y,t) = 
\begin{cases}
\frac{1}{2} & 0\leq x<1,y=0, t=\frac{1}{2}\\
\frac{1}{2} & 1\leq x\leq 2,y=1, t=\frac{1}{2}\\
0 & \text{otherwise}
\end{cases}.
\]
In addition, for any finite size subset $\mS$  of $\mZ$ in the support of $\mD$, denote
\[
h^{\mS\rightarrow 0}(x) = 
\begin{cases}
0 & \exists y,t: (x,y,t)\in \mS \\
\ind_{1\leq x\leq 2} & \forall y,t: (x,y,t)\notin   \mS 
\end{cases}, \quad
h^{\mS\rightarrow 1}(x) = 
\begin{cases}
1 & \exists y,t: (x,y,t)\in \mS \\
\ind_{1\leq x\leq 2} & \forall y,t: (x,y,t)\notin  \mS 
\end{cases}.
\]
In other words, $h^{\mS\rightarrow 0}$ labels 0 every instance that appears in the sample $\mS$ and every instance in the $[0,1)$ segment. On the other hand, $h^{\mS\rightarrow 1}$ labels 1 every instance that appears in the sample $\mS$ and every instance in the $[1,2]$ segment. Denote
\[
\mH_1 = \{h^{\mS\rightarrow 0} \mid \mS \subset \mZ   \}   \cup \{h^{\mS\rightarrow 1} \mid \mS \subset \mZ   \},
\]
and let $\mH_2 = \mH_3 = \{ \ind_{0\leq x < 1}, \ind_{1\leq x\leq 2}  \}$. In this three-player game, consider the profile $\bl h=(h_1,h_2,h_3)$ such that
$ h_1= h^{\mS\rightarrow 0},\quad h_2=h_3 =\ind_{1\leq x\leq 2}$. 
Notice that the payoffs under $\bl h$ are defined as follows:
\[
\pi_1(\bl h )=\frac{1}{m}\sum_{j=1}^m (1-y_j),\quad \pi_2(\bl h )=\pi_2(\bl h )=\frac{1}{2m}\sum_{j=1}^m y_j.
\]
Observe that if $\frac{1}{2}<\frac{1}{m}\sum_{j=1}^m y_j < \frac{3}{4}$, then $\bl h$ is an empirical PNE, since no player can improve her payoff. Notice, however, that $\pi_3(\bl h)=\frac{1}{6}$ yet $\pi_3(\ind_{0\leq x < 1},\bl h_{-2})=\frac{1}{4}$.

Since we have $\frac{1}{2}<\frac{1}{m}\sum_{j=1}^m y_j < \frac{3}{4}$ with probability of at least $\frac{1}{4}$ over all choices of $\mS$ for $\abs{\mS} \geq 15$ (see{\ifnum\Includeappendix=1{ Claim \ref{claim:auxhoeffbin} in}\fi} the appendix), this empirical equilibrium will not be generalized to $\frac{1}{12}$-PNE w.p. of at least $\frac{1}{4}$. This is true for any $\epsilon,\delta \in(0,1)$; thus, an empirical PNE is not generalized to an approximate PNE w.h.p. 
\end{example}

Another interesting point is that in Example \ref{example:unlearnability} each player can trivially find a strategy that maximizes her payoff if she were alone, since $\mD$ is known. Indeed, this inability to generalize from samples follows solely from strategic behavior. Notice that if player 3 has knowledge of $\mH_1$, she can infer that her strategy under $\bl h$ is sub-optimal. However, knowledge of the strategy spaces of other players is a heavy assumption: the better-response dynamics we discussed in Subsection \ref{subsec:dyn} only assumed that each player can compute a better response.


\section{Discussion}
\label{sec:discussion}

As mentioned in Section \ref{subsec:model}, our analysis assumes players have better-response oracles. In fact, our model and results are valid for a much more general scenario, as described next. Consider the case where players only have heuristics for finding a better response. After running heuristic better-response dynamics and obtaining a strategy profile, the payoffs with respect to the whole population are guaranteed to be close to their empirical counterparts, w.h.p.; therefore, our analysis is still meaningful even if players cannot maximize their empirical payoff efficiently, as the bounds on the required sample size we obtained in Section \ref{sec:meta} and the rate of convergence are relevant for this case as well. 

The reader may wonder about a variation of our model, where player payoffs are defined differently. For example, consider each user as granting one monetary unit to the player that offers the closest prediction to his instance. This definition is in the spirit of Dueling Algorithms \cite{immorlica2011dueling} and Best Response Regression \cite{porat2017best}. Under this payoff function, and unlike our model, an empirical PNE does not necessarily exist. Nevertheless, we believe that examining and understanding these scenarios is fundamental to analysis of competing prediction algorithms, and deserves future work.

{\ifnum\Includeackerc=1{
\section*{Acknowledgments}\label{sec:Acknowledgments}
This project has received funding from the European Research Council (ERC) under the European Union's Horizon 2020 research and innovation programme (grant agreement n$\degree$  740435).}\fi}

{\ifnum\Includeappendix=1{ 

\appendix

\section{Omitted proofs from Section \ref{sec:meta}}

\begin{proof}[\textbf{Proof of Lemma \ref{lemma:regtoclass}}]
First, we define two auxiliary classes of binary functions $\mG^{\geq},\mG^{\leq}$ such that
\begin{align}
\label{eq:ggeqeqdef}
&\mG^{\geq} =\{g_h^\geq (x,r)= \ind_{ h(x) \geq r} \mid h\in \mH_i ,(x,r)\in \mX\times \mathbb{R}  \},\nonumber \\
&\mG^{\leq} =\{g_h^\leq (x,r)=\ind_{ h(x) \leq r} \mid h\in \mH_i ,(x,r)\in \mX\times \mathbb{R}  \}.
\end{align}
\begin{claim}
\label{claim:auxiliaryg}
$\vc(\mG^{\geq})=\vc(\mG^{\leq}) = d_i$. 
\end{claim}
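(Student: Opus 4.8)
The plan is to recognize $\mG^{\geq}$ and $\mG^{\leq}$ as the classes of indicator functions of the subgraphs (respectively epigraphs) of functions in $\mH_i$, so that the claim is an instance of the classical identity between the pseudo-dimension of a real-valued class and the VC dimension of its subgraph class. I would prove $\vc(\mG^{\geq})=d_i$ by establishing both inequalities, and then obtain $\vc(\mG^{\leq})=d_i$ by a symmetric argument. The core is the correspondence: $\mH_i$ pseudo-shatters $\{x_1,\dots,x_m\}$ with threshold witness $\bl r=(r_1,\dots,r_m)$ essentially iff $\mG^{\geq}$ shatters the augmented points $\{(x_1,r_1),\dots,(x_m,r_m)\}$, under the dictionary $b_j=+1 \leftrightarrow$ label $1$ and $b_j=-1 \leftrightarrow$ label $0$.

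For the lower bound $\vc(\mG^{\geq})\geq d_i$, I would start from a set $\{x_1,\dots,x_{d_i}\}$ pseudo-shattered by $\mH_i$ with witness $\bl r$ and show that $\mG^{\geq}$ shatters $\{(x_1,r_1),\dots,(x_{d_i},r_{d_i})\}$. Given any target labeling in $\{0,1\}^{d_i}$, I translate it into a sign pattern $\bl b \in \{-1,1\}^{d_i}$ and take the corresponding $h_{\bl b}$; since $\sign(h_{\bl b}(x_j)-r_j)=+1$ forces $h_{\bl b}(x_j)>r_j$ (hence $\ge r_j$, giving $g_{h_{\bl b}}^{\geq}(x_j,r_j)=1$) and $\sign(h_{\bl b}(x_j)-r_j)=-1$ forces $h_{\bl b}(x_j)<r_j$ (giving $0$), the function $g_{h_{\bl b}}^{\geq}$ realizes exactly the desired labeling, so all $2^{d_i}$ labelings are attained.

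For the upper bound $\vc(\mG^{\geq})\leq d_i$, suppose $\mG^{\geq}$ shatters $\{(x_1,r_1),\dots,(x_m,r_m)\}$; I would show $\mH_i$ pseudo-shatters $\{x_1,\dots,x_m\}$, whence $m\leq \pdim(\mH_i)=d_i$. First I would observe the $x_j$ are pairwise distinct: if $x_j=x_k$ with $r_j<r_k$, then $\ind_{h(x)\geq r_k}=1$ already implies $\ind_{h(x)\geq r_j}=1$, so the labeling assigning $0$ to $(x_j,r_j)$ and $1$ to $(x_k,r_k)$ is unattainable, contradicting shattering. The main obstacle is that shattering by $\mG^{\geq}$ only yields the non-strict inequality $h(x_j)\geq r_j$ on the ``$1$'' side, whereas pseudo-shattering demands the strict sign $h(x_j)>r_j$. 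I would resolve this by perturbing the witness: set $r_j'=r_j-\epsilon$, with $\epsilon>0$ chosen smaller than the minimum of the finitely many strictly positive gaps $r_j-h_{\bl c}(x_j)$ taken over the shattering functions $h_{\bl c}$ and the indices $j$ where $h_{\bl c}(x_j)<r_j$. With $\bl r'=(r_1',\dots,r_m')$ as witness, each labeling realized by some $h_{\bl c}$ now yields a strict sign pattern, so $\mH_i$ pseudo-shatters $\{x_1,\dots,x_m\}$.

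Finally, for $\vc(\mG^{\leq})=d_i$ I would run the identical argument with the two sign values swapped: the lower bound uses $b_j=-1\leftrightarrow$ label $1$ (since $h(x_j)<r_j$ gives $\ind_{h(x_j)\leq r_j}=1$), and the upper bound uses the symmetric monotonicity observation together with the analogous perturbation $r_j'=r_j+\epsilon$ to convert the non-strict $\leq$ into the strict sign required by pseudo-shattering. I expect the perturbation step---ensuring a single $\epsilon$ works simultaneously across all labelings---to be the only delicate point, and it is handled by the finiteness of the collection of shattering witnesses.
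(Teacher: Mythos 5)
Your proposal is correct and follows essentially the same route as the paper's proof: the lower bound via mapping a pseudo-shattered set $\{x_j\}$ with witness $\bl r$ to the shattered augmented set $\{(x_j,r_j)\}$, and the upper bound by converting a shattered set for $\mG^{\geq}$ back into a pseudo-shattered set after perturbing the witness to turn the non-strict inequality on the label-$1$ side into a strict one (the paper uses the per-coordinate midpoint $r_j^*=\tfrac{a_j+r_j}{2}$ where your uniform $\epsilon$ over the finitely many positive gaps does the same job). Your added observation that the $x_j$ must be pairwise distinct is a small point of rigor the paper omits, but otherwise the arguments coincide.
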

The proof of Claim \ref{claim:auxiliaryg} appears in Section \ref{sec:additionalproofs}. Next, we wish to bound the growth function of $\mF_i$ using the growth function of $\mG^{\geq}$ and $\mG^{\leq}$.
\begin{claim}
\label{claim:fgrowth}
$\Pi_{\mF_i}(m) \leq \Pi_{\mG^{\geq}}(m) \cdot \Pi_{\mG^{\leq}}(m)$.
\end{claim}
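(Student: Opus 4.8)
The plan is to exploit the elementary observation that the interval indicator $\mI$ factors into a product of two one-sided threshold indicators, and then bound the number of distinct product patterns by the product of the numbers of distinct patterns of each factor.

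First I would record the factorization. For $z=(x,y,t)$ with $t\ge 0$ (recall $\mT\subseteq\R_{\geq 0}$), the condition $\abs{h(x)-y}\le t$ is equivalent to $y-t\le h(x)\le y+t$, so
\[
\mI(z,h)=\ind_{h(x)\ge y-t}\cdot\ind_{h(x)\le y+t}=g_h^{\geq}(x,y-t)\cdot g_h^{\leq}(x,y+t).
\]
In words, each coordinate of a restriction vector of $\mF_i$ is the product of one coordinate of a restriction vector of $\mG^{\geq}$ and one of $\mG^{\leq}$, where the two evaluation points are obtained from $z$ by shifting the label down and up by the threshold.

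Next, for an arbitrary sample $\mS=(z_1,\dots,z_m)\in\mZ^m$ with $z_j=(x_j,y_j,t_j)$, I build two auxiliary samples over $\mX\times\R$, namely $\mathcal U=\big((x_1,y_1-t_1),\dots,(x_m,y_m-t_m)\big)$ for $\mG^{\geq}$ and $\mathcal V=\big((x_1,y_1+t_1),\dots,(x_m,y_m+t_m)\big)$ for $\mG^{\leq}$. Consider the map $\Phi:\mH_i\to\{0,1\}^m\times\{0,1\}^m$ sending $h$ to the pair of restriction vectors $\big((g_h^{\geq}(\mathcal U_j))_{j=1}^m,\,(g_h^{\leq}(\mathcal V_j))_{j=1}^m\big)$. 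Its image is contained in $(\mG^{\geq}\cap\mathcal U)\times(\mG^{\leq}\cap\mathcal V)$, hence has at most $\abs{\mG^{\geq}\cap\mathcal U}\cdot\abs{\mG^{\leq}\cap\mathcal V}$ elements. By the factorization above, the restriction vector $(\mI(z_j,h))_{j=1}^m$ is the coordinate-wise product of the two components of $\Phi(h)$, so it is a function of $\Phi(h)$; consequently two hypotheses producing the same element of $\mF_i\cap\mS$ need not agree, but two hypotheses that differ in $\mF_i\cap\mS$ must differ under $\Phi$. Therefore
\[
\abs{\mF_i\cap\mS}\le \abs{\mG^{\geq}\cap\mathcal U}\cdot\abs{\mG^{\leq}\cap\mathcal V}\le \Pi_{\mG^{\geq}}(m)\cdot\Pi_{\mG^{\leq}}(m).
\]
Taking the maximum over all $\mS\in\mZ^m$ on the left-hand side yields the claim.

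There is no deep obstacle here; once the factorization is in place the argument is a counting/pigeonhole step. The only points that need care are the bookkeeping that the product vector is genuinely determined by the pair $\Phi(h)$ (so that the count of products is dominated by the count of pairs, and not the other way around), and the harmless but necessary observation that $t\ge 0$ guarantees the equivalence between interval membership and the conjunction of the two one-sided conditions, so that no degenerate empty-interval case arises.
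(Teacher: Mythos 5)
Your proof is correct and follows essentially the same route as the paper's: both factor the interval indicator as $\mI(z,h)=g_h^{\geq}(x,y-t)\cdot g_h^{\leq}(x,y+t)$, restrict $\mG^{\geq}$ and $\mG^{\leq}$ to the shifted samples $(x_j,y_j-t_j)_j$ and $(x_j,y_j+t_j)_j$, and bound $\abs{\mF_i\cap\mS}$ by the product of the two restriction sizes. The only cosmetic difference is that the paper exhibits an explicit injection from $\mF_i\cap\mS$ into the product set via a case analysis on $f_h(z_j)\in\{0,1\}$, whereas you note that the $\mF_i$-pattern is the coordinatewise product of the pair $\Phi(h)$ and hence determined by it, which yields the same cardinality bound.
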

The proof of Claim \ref{claim:fgrowth} appears in Section \ref{sec:additionalproofs}. We are now ready for the final argument. By the Sauer-Shelah lemma we know that every $m$ satisfying $2^m > \Pi_{\mF_i}(m)$ is an upper bound on $\vc(\mF_i)$ \cite{sauer1972density}. In particular, for $m=10d_i$ we have
\begin{align}
\sqrt{2^{10d_i}}&=2^{5d_i}=(31+1)^{d_i}=\sum_{j=0}^{d_i}31^j 1^{d_i-j} {d_i \choose j} \stackrel {\text{Claim \ref{claim:binomialfactors}}}{\geq }
\sum_{j=0}^{d_i} \left( \frac{31 d_i}{j}\right)^j  \\
&\geq \sum_{j=0}^{d_i} \left( \frac{10 e d_i}{j}\right)^j \stackrel {\text{Claim \ref{claim:binomialfactors}}}{\geq } \sum_{j=0}^{d_i} {10d_i \choose j} \nonumber \\ 
&\geq \Pi_{\mG^{\geq}}(10d_i)=\Pi_{\mG^{\leq}}(10d_i);
\end{align}
therefore
\[
\Pi_{\mF_i}(10d_i) \leq \Pi_{\mG^{\geq}}(10d_i)\Pi_{\mG^{\leq}}(10d_i) < 2^{10d_i} .
\]

\end{proof}

\begin{proof}[\textbf{Proof of Lemma \ref{lemma:growthsumvc}}]
Recall that the Sauer - Shelah lemma implies that $\Pi_{\mF_i}(m) \leq (em/d_i)^{d_i}$ for $m>d_i+1$. Since $\abs{\mF \cap \mS}=\prod_{i=1}^N \abs{\mF_i \cap \mS}$, we have
\[
\Pi_{\mF}(m) \leq \prod_{i=1}^N \Pi_{\mF_i}(m) \leq \prod_{i=1}^N (em/d_i)^{10d_i}\leq  \prod_{i=1}^N (em)^{10d_i}=(em)^{10\sum_{i=1}^N d_i}
\]
\end{proof}

\begin{proof}[\textbf{Proof of Lemma \ref{lemma:uniconvergenceoneplayer}}]
\newcommand{\hbad}{\tilde{ \bl h}(\mS)}
The proof follow closely the four steps in the proof of the classical uniform convergence theorem for binary functions (see, e.g., \cite{anthony2009neural,vapnik1971uniform}). The only steps that need modification are step 3 and 4, but we present the full proof for completeness.

\textit{Step 1 -- Symmetrization:}
First, we want to show that
\begin{equation}
\Pr_{\mS\sim \mD^m}\left(\exists \bl{h} : \abs{\pi_i(\bl h)-\pi_i^{\mS}(\bl h)} \geq  \epsilon \right) \leq 
2 \Pr_{(\mS,\mS')\sim \mD^m}\left(\exists \bl{h} : \abs{\pi_i^{\mS}(\bl h)-\pi_i^{\mS'}(\bl h)} \geq  \frac{\epsilon}{2} \right) \label{eq:steponeaa}.
\end{equation}

 For each $\mS$, let $\hbad$ be a function for which $\abs{\pi_i(\hbad)-\pi_i^{\mS}(\hbad)} \geq \epsilon$ if such a function exists, and any other fixed function in $\mH$ otherwise. Notice that if $\abs{\pi_i(\hbad)-\pi_i^{\mS}(\hbad)} \geq \epsilon$ and  $\abs{\pi_i(\hbad)-\pi_i^{\mS'}(\hbad)} \leq  \frac{\epsilon}{2}$, then $\abs{\pi_i^{\mS}(\hbad)-\pi_i^{\mS'}(\hbad)} \geq   \frac{\epsilon}{2}$ (triangle inequality); thus,
\begin{align*}
&\Pr_{(\mS,\mS')\sim \mD^m}\left(\exists \bl{h} : \abs{\pi_i^{\mS}(\bl h)-\pi_i^{\mS'}(\bl h)} \geq  \frac{\epsilon}{2} \right) \\
&\geq \Pr_{(\mS,\mS')\sim \mD^m}\left(  \abs{\pi_i^{\mS}(\hbad)-\pi_i^{\mS'}(\hbad)} \geq  \frac{\epsilon}{2} \right) \\
&\geq \Pr_{(\mS,\mS')\sim \mD^m}\left( \abs{\pi_i(\hbad)-\pi_i^{\mS}(\hbad)} \geq \epsilon \cap \abs{\pi_i(\hbad)-\pi_i^{\mS'}(\hbad)}  \leq \frac{\epsilon}{2} \right) \\
&=\E_{\mS\sim \mD^m} \left[ \ind\left( \abs{\pi_i(\hbad)-\pi_i^{\mS}(\hbad)} \geq \epsilon  \right)\Pr_{\mS'\mid \mS}\left( \abs{\pi_i(\hbad)-\pi_i^{\mS'}(\hbad)}  \leq \frac{\epsilon}{2}    \right) \right] \\ 
&\geq \frac{1}{2}\Pr_{\mS \sim \mD^m}\left(  \abs{\pi_i(\hbad)-\pi_i^{\mS}(\hbad)} \geq \epsilon \right)\\
&= \frac{1}{2}\Pr_{\mS \sim \mD^m}\left(  \exists \bl{h} : \abs{\pi_i(\bl h)-\pi_i^{\mS}(\bl h)} \geq \epsilon \right),
\end{align*}

since $\mS,\mS'$ are independent and due to Claim \ref{claim:rqep}.

\textit{Step 2 -- Permutations:} we denote $\Gamma_{2m}$ as the set of all permutations of $[2m]$ that swap $i$ and $m+i$ in some subset of $[m]$. Namely,
\[
\Gamma_{2m} = \{\sigma\in  \Pi([2m]) \mid \forall i\in[m] :\sigma(i) = i \lor \sigma(i)= m+i; \forall i,j \in [2m]: \sigma(i) = j \Leftrightarrow \sigma(j)=i   \},
\]
where $\Pi([2m])$ denotes the set of permutations over $[2m]$. In addition, for $\mS = (z_1,\dots,z_{2m})$, let $\sigma(\mS)=(z_{\sigma(1)},\dots,z_{\sigma(2m)})$. Notice that for every $\sigma \in \Gamma_{2m}$ it holds that
\[
\Pr_{(\mS,\mS')\sim \mD^m}\left(\exists \bl{h} : \abs{\pi_i^{\mS}(\bl h)-\pi_i^{\mS'}(\bl h)} \geq  \frac{\epsilon}{2} \right) = \Pr_{(\mS,\mS')\sim \mD^m}\left(\exists \bl{h} : \abs{\pi_i^{\sigma(\mS)}(\bl h)-\pi_i^{\sigma(\mS')}(\bl h)} \geq  \frac{\epsilon}{2} \right);
\]
hence,
\begin{align}
&\Pr_{(\mS,\mS')\sim \mD^{2m}}\left(\exists \bl{h} : \abs{\pi_i^{\mS}(\bl h)-\pi_i^{\mS'}(\bl h)} \geq  \frac{\epsilon}{2} \right) \nonumber\\
& =\frac{1}{2^m}\sum_{\sigma\in \Gamma_{2m}} \Pr_{(\mS,\mS')\sim \mD^{2m}}\left(\exists \bl{h} : \abs{\pi_i^{\sigma(\mS)}(\bl h)-\pi_i^{\sigma(\mS')}(\bl h)} \geq  \frac{\epsilon}{2} \right)\nonumber\\
& =\frac{1}{2^m}\sum_{\sigma\in \Gamma_{2m}} \E_{(\mS,\mS')\sim \mD^{2m}}\left[\ind_{\exists \bl{h} : \abs{\pi_i^{\sigma(\mS)}(\bl h)-\pi_i^{\sigma(\mS')}(\bl h)} \geq  \frac{\epsilon}{2}} \right]\nonumber\\
& =\E_{(\mS,\mS')\sim \mD^{2m}}\left[\frac{1}{2^m}\sum_{\sigma\in \Gamma_{2m}} \ind_{\exists \bl{h} : \abs{\pi_i^{\sigma(\mS)}(\bl h)-\pi_i^{\sigma(\mS')}(\bl h)} \geq  \frac{\epsilon}{2}} \right]\nonumber\\
& =\E_{(\mS,\mS')\sim \mD^{2m}}\left[\Pr_{\sigma\in \Gamma_{2m}} \left(\exists \bl{h} : \abs{\pi_i^{\sigma(\mS)}(\bl h)-\pi_i^{\sigma(\mS')}(\bl h)} \geq  \frac{\epsilon}{2}\right) \right]\nonumber\\
&\leq \sup_{(\mS,\mS')\sim \mD^{2m}}\left[\Pr_{\sigma\in \Gamma_{2m}} \left(\exists \bl{h} : \abs{\pi_i^{\sigma(\mS)}(\bl h)-\pi_i^{\sigma(\mS')}(\bl h)} \geq  \frac{\epsilon}{2}\right) \right] \label{eq:steptwoperm}.
\end{align}

\textit{Step 3 -- Reduction to finite class:} fix $(\mS,\mS')$ and consider a random draw of $\sigma \in \Gamma_{2m}$. For each strategy profile $\bl h$, the quantity $\abs{\pi_i^{\sigma(\mS)}(\bl h)-\pi_i^{\sigma(\mS')}}$ is a random variable. Since $\mF\cap \mS$ represents the number of distinct strategy profiles in the empirical game over $\mS$ (see Subsection \ref{subsec:sample}), there are at most $\Pi_{\mF}(2m)$ such random variables.

\begin{align}
&\Pr_{\sigma\in \Gamma_{2m}} \left(\exists \bl{h} : \abs{\pi_i^{\sigma(\mS)}(\bl h)-\pi_i^{\sigma(\mS')}(\bl h)} \geq  \frac{\epsilon}{2}\right)  \nonumber\\
&\leq \Pi_{\mH}(2m) \sup_{\bl h \in \mH} \Pr_{\sigma\in \Gamma_{2m}} \left(\abs{\pi_i^{\sigma(\mS)}(\bl h)-\pi_i^{\sigma(\mS')}(\bl h)} \geq  \frac{\epsilon}{2}\right) \label{hehehehe}
\end{align}

\textit{Step 4 -- Hoeffding's inequality:} by viewing the previous equation as Rademacher random variables, we have
\begin{align*}
&\Pr_{\sigma\in \Gamma_{2m}} \left(\abs{\pi_i^{\sigma(\mS)}(\bl h)-\pi_i^{\sigma(\mS')}(\bl h)} \geq  \frac{\epsilon}{2}\right) \\
&= \Pr_{\sigma\in \Gamma_{2m}} \left(\frac{1}{m}\abs{\sum_{j=1}^m w_i(z_{\sigma(j)},\bl h)-w_i(z_{\sigma(j+m)},\bl h)} \geq  \frac{\epsilon}{2}\right)\\
& =\Pr_{r\in\{-1,1\}^m}\left(\frac{1}{m}\abs{\sum_{j=1}^m r_j\left(w_i(z_j,\bl h)-w_i(z_{j+m},\bl h)\right)} \geq  \frac{\epsilon}{2}\right).
\end{align*}
Observe that for every $j$ it holds that $r_j\left(w_i(z_j,\bl h)-w_i(z_{j+m},\bl h)\right) \in [-1,1]$, and
\[
\E_{r_j\in\{-1,1\}}\left[ r_j\left(w_i(z_j,\bl h)-w_i(z_{j+m},\bl h)\right) \right]=0
\]
holds due to symmetry. By applying Hoeffding's inequality we obtain
\begin{align}
\label{eq:sfourfin}
 \Pr_{r\in\{-1,1\}^m}\left(\frac{1}{m}\abs{\sum_{j=1}^m r_j\left(w_i(z_j,\bl h)-w_i(z_{j+m},\bl h)\right)} \geq  \frac{\epsilon}{2}\right) \leq 2e^{-\frac{m\epsilon^2}{8}}.
\end{align}
Finally, we combining Equations (\ref{eq:steponeaa}),(\ref{eq:steptwoperm}),(\ref{hehehehe}) and (\ref{eq:sfourfin}) we derive the desired result.
\end{proof}

\begin{proof}[\textbf{Proof of Theorem \ref{thm:unionbound}}]
The Theorem follows immediately by applying the union bound on the inequality obtained in Lemma \ref{lemma:uniconvergenceoneplayer} and by substituting $\Pi_\mF(2m) $ according to Lemma \ref{lemma:growthsumvc}.
\end{proof}

\begin{proof}[\textbf{Proof of Lemma \ref{lemma:PNEexistence}}]
The lemma is proven by showing that induced game has a potential function $\Phi : \mH \rightarrow \mathbb R$. Namely, we show a function $\Phi$ such that for every $i,\bl h$ and $h_i'$ it holds that \[ \pi_i(\bl h)-\pi_i(h_i', \bl h_{-i}) = \Phi(\bl h)-\Phi(h_i', \bl h_{-i}). \]
Denote $\Phi(\bl h) = \frac{1}{m}\sum_{j=1}^m \sum_{k=1}^{N(z_j;\bl h)} \frac{1}{k}$, and observe that
\begin{align*}
& \pi_i(\bl h)-\pi_i(h_i', \bl h_{-i})=\frac{1}{m}\sum_{j=1}^m w_i(z_j;\bl h) -   \frac{1}{m}\sum_{j=1}^m w_i(z_j;h_i', \bl h_{-i}) \\
& =\frac{1}{m}\sum_{j=1}^m \frac{\mI_i(z;h_i)}{\abs{N(z;\bl h)}}  -   \frac{1}{m}\sum_{j=1}^m  \frac{\mI_i(z;h_i')}{\abs{N(z;h_i', \bl h_{-i})} }+\frac{1}{m}\sum_{j=1}^m \sum_{k=1}^{N(z_j;\bl h_{-i})} \frac{1}{k}-\frac{1}{m}\sum_{j=1}^m \sum_{k=1}^{N(z_j;\bl h_{-i})} \frac{1}{k} \\
& = \frac{1}{m}\sum_{j=1}^m \sum_{k=1}^{N(z_j;\bl h)} \frac{1}{k} - \frac{1}{m}\sum_{j=1}^m \sum_{k=1}^{N(z_j;h_i, \bl h_{-i})} \frac{1}{k}=\Phi(\bl h)-\Phi(h_i', \bl h_{-i}).
\end{align*}
\end{proof}

\begin{proof}[\textbf{Proof of Lemma \ref{lemma:betterconverge}}]
In each iteration of the dynamics it holds that
\begin{align}
\label{eq:bnfesw}
\Phi(h_i', \bl h_{-i}) - \Phi(\bl h) = \pi_i(h_i', \bl h_{-i}) - \pi_i(\bl h) \geq \epsilon.
\end{align}
Notice that 
\begin{equation}
\label{eq:aewrbg}
\Phi(\bl h) =  \frac{1}{m}\sum_{j=1}^m \sum_{k=1}^{N(z_j;\bl h)} \frac{1}{k} \leq  \frac{1}{m}\sum_{j=1}^m \sum_{k=1}^{N} \frac{1}{k} \leq \ln N +1.
\end{equation}
Since the potential is bounded by $\ln N +1$ and increases by at least $\epsilon$ per iteration throughout the dynamics, after at most $\frac{\ln N +1}{\epsilon}$ iterations it will reach its maximum value, thereby obtaining an $\epsilon$-PNE.
\end{proof}

\begin{proof}[\textbf{Proof of Lemma \ref{lemma:deltaandm}}]
By Equation (\ref{eq:inthmunionbound}), we look for $m$ that satisfies
\[
4N (2em)^{10\sum_{i=1}^N d_i}e^{-\frac{\epsilon^2 m}{8}} \leq \delta
\]
for given $\epsilon,\delta$; thus
\begin{align}
&(2em)^{10\sum_{i=1}^N d_i}e^{-\frac{\epsilon^2 m}{8}} \leq \frac{\delta}{4N} \nonumber \\
&\Rightarrow 10d \log(2em)-\frac{\epsilon^2 m}{8} \leq \log \frac{\delta}{4N}\nonumber \\
& \frac{\epsilon^2 m}{8} \geq 10d\log(2em)-\log \frac{\delta}{4N}\nonumber \\
& m\geq \frac{80d\log(2em)}{\epsilon^2}-\frac{8}{\epsilon^2}\log \frac{\delta}{4N}\nonumber \\
& m\geq \frac{80d}{\epsilon^2}\log(m)+\frac{80d\log(2e)}{\epsilon^2}+\frac{8}{\epsilon^2}\log \frac{4N}{\delta} \label{eq:dgfjnkindfjsnk}
\end{align}
Next, 
\begin{claim}[\cite{shalev2014understanding}, Section A]
\label{claim:fromsssappendix}
Let $a\geq 1$ and $b>0$. If $m \geq 4a\log(2a)+2b$, then $m\geq a \log m +b$.
\end{claim}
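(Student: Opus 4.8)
The plan is to avoid the implicit, self-referential form of the inequality $m \ge a\log m + b$ by linearizing the logarithm with a single well-chosen tangent bound. The elementary fact I would start from is that for every $\alpha > 0$ and every $x > 0$ one has $\log x \le \alpha x - \log\alpha - 1$; this follows at once by minimizing $x \mapsto \alpha x - \log x$, whose unique minimum is attained at $x = 1/\alpha$ with value $1 + \log\alpha$. Here $\log$ is the natural logarithm, consistent with the rest of the excerpt (e.g. the appearance of $e^{-\epsilon^2 m/8}$ and the passage to $\ln N + 1$).

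With this in hand, the second step is to choose $\alpha = \tfrac{1}{2a}$, which is precisely the value that turns the leading contribution $a\cdot\alpha x$ into $x/2$. Substituting $x = m$ and multiplying by $a$ gives $a\log m \le \tfrac{m}{2} + a\log(2a) - a$, hence $a\log m + b \le \tfrac{m}{2} + a\log(2a) - a + b$. It therefore suffices to show that the remaining terms are at most $m/2$, i.e. that $a\log(2a) - a + b \le \tfrac{m}{2}$, which rearranges to the weaker threshold $m \ge 2a\log(2a) - 2a + 2b$.

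The final step is to verify that the hypothesis $m \ge 4a\log(2a) + 2b$ indeed implies this weaker threshold. I would simply compute the gap $4a\log(2a) + 2b - \bigl(2a\log(2a) - 2a + 2b\bigr) = 2a\bigl(\log(2a) + 1\bigr)$, which is nonnegative because $a \ge 1$ forces $\log(2a) \ge \log 2 > 0 > -1$. Thus the given threshold dominates the one needed, and the claim follows. I expect no serious obstacle: the only points requiring care are the specific choice $\alpha = 1/(2a)$ and the sign check in this last display, where the assumption $a \ge 1$ is exactly what makes $\log(2a) + 1$ positive. An alternative route would be to show that $g(m) = m - a\log m - b$ is increasing for $m > a$ and then check $g \ge 0$ at the boundary $m = 4a\log(2a) + 2b$; this also works, but it forces one to bound the nested logarithm $\log\bigl(4a\log(2a) + 2b\bigr)$, an annoyance that the tangent-line argument sidesteps entirely.
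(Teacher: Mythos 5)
Your proof is correct. Note first that the paper itself gives no proof of this claim: it is quoted verbatim from Section~A of \cite{shalev2014understanding} (Lemma A.2 there), so there is nothing internal to compare against. The textbook's argument is different from yours: it splits the target into the two sufficient conditions $m/2 \geq a\log m$ and $m/2 \geq b$, handles the second trivially from $m \geq 2b$, and reduces the first to an auxiliary lemma ($x \geq 2a'\log a' \Rightarrow x \geq a'\log x$, applied with $a' = 2a$) whose proof needs a case split on the size of $a'$ and a monotonicity argument for $x \mapsto x - a'\log x$. Your tangent-line bound $\log x \leq \alpha x - \log\alpha - 1$ with $\alpha = 1/(2a)$ collapses all of this into one linearization plus an explicit comparison of thresholds, and in fact establishes the conclusion under the weaker hypothesis $m \geq 2a\log(2a) - 2a + 2b$; the only hypotheses you actually use are $a\log(2a) + a \geq 0$ (guaranteed by $a \geq 1$, and indeed by any $a \geq 1/(2e)$) and $m > 0$. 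This is a cleaner and slightly stronger route than the cited one, and every step checks out.
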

Set $a = \frac{80d}{\epsilon^2}$ and $b=\frac{80d\log(2e)}{\epsilon^2}+\frac{8}{\epsilon^2}\log \frac{4N}{\delta}$. Due to Claim \ref{claim:fromsssappendix}, we know that every $m$ that satisfies
\[
m \geq \frac{320d}{\epsilon^2} \log\left( \frac{160d}{\epsilon^2} \right) +\frac{160d\log(2e)}{\epsilon^2}+\frac{16}{\epsilon^2}\log\left( \frac{4N}{\delta} \right)
\]
also satisfy Equation (\ref{eq:dgfjnkindfjsnk}).
\end{proof}

\begin{proof}[\textbf{Proof of Lemma \ref{lemma:empiseq}}]
Notice that for every $i,h'_i$ it holds that
\begin{align*}
\pi_i(h_i', \bl h_{-i})-\pi_i(\bl h)&=\pi_i(h_i', \bl h_{-i})-\pi_i^\mS(h_i', \bl h_{-i})+\pi_i^\mS(h_i', \bl h_{-i})-\pi_i(\bl h) \\
& \stackrel{\substack{\bl h \text{ is an } \\ \frac{\epsilon}{2}\text{-empirical-PNE}} }{\leq} \pi_i(h_i', \bl h_{-i})-\pi_i^\mS(h_i', \bl h_{-i})+\pi_i^\mS(\bl h)-\pi_i(\bl h)+\frac{\epsilon}{2};
\end{align*}
therefore, if $\pi_i(h_i', \bl h_{-i})-\pi_i(\bl h)>\epsilon$ then at least one of $\pi_i(h_i', \bl h_{-i})-\pi_i^\mS(h_i', \bl h_{-i})> \frac{\epsilon}{4}$ or $\pi_i^\mS(\bl h)-\pi_i(\bl h) > \frac{\epsilon}{4}$ must hold. Overall,
\begin{align*}
&\Pr_{\mS\sim \mD^m}\left(\bl h \text{ is not an  } \epsilon \text{-PNE}  \right) = \Pr_{\mS\sim \mD^m}\left(\exists i \in [N], h'_i \in \mH_i: \pi_i(h_i', \bl h_{-i})-\pi_i(\bl h) > \epsilon \right) \\
& \leq \Pr_{\mS\sim \mD^m}\left(\exists i \in [N], h'_i \in \mH_i: \pi_i(h_i', \bl h_{-i})-\pi_i^\mS(h_i', \bl h_{-i})> \frac{\epsilon}{4} \text{ or } \pi_i^\mS(\bl h)-\pi_i(\bl h) > \frac{\epsilon}{4} \right) \\
& \leq \Pr_{\mS\sim \mD^m}\left(\exists i \in [N], h'_i \in \mH_i: \abs{\pi_i(h_i', \bl h_{-i})-\pi_i^\mS(h_i', \bl h_{-i})> \frac{\epsilon}{4}} \text{ or } \abs{\pi_i^\mS(\bl h)-\pi_i(\bl h) > \frac{\epsilon}{4}} \right) \\
& \leq \Pr_{\mS\sim \mD^m}\left(\exists i\in[N] : \sup_{\bl h'' \in \mH}\abs{\pi_i(\bl h'')-\pi_i^{\mS}(\bl h'')} \geq  \frac{\epsilon}{4} \right) \stackrel{m \geq m_{\frac{\epsilon}{4},\delta}}{\leq }\delta.
\end{align*}

\end{proof}

\section{Additional claims and proofs}
\label{sec:additionalproofs}
\begin{proof}[\textbf{Proof of Claim \ref{claim:auxiliaryg}}]

We prove the claim for $\mG^{\geq}$, and by symmetric arguments one can show it holds for $\mG^{\leq }$ as well.

Since $\pdim(\mH_i)=d_i$, for every $m\leq d_i$ there is a sample $\mS=(x_1,\dots  x_m)\in \mX^m$ and a witness $\bl r=(r_1,\dots r_m) \in \R^m$ such that for every binary vector $\bl b\in \{-1,1\}^m$ there is a function $h_{\bl b}$ for which $\sign(h_{\bl b}(x_j)-r_j)=b_j$ for all $j\in [m]$. Denote 
\[
\mS' = \left( (x_1,r_1),\dots,(x_m,r_m) \right),
\]
and focus on a particular $\bl b\in \{-1,1\}^m$. For every $j\in [m]$ such that $b_j=1$ we have
\[
\sign(h_{\bl b}(x_j)-r_j)=1 \Rightarrow  h_{\bl b}(x_j)-r_j >0 \Rightarrow\ind_{h_{\bl b}(x)\geq r}(x_j,r_j)=1.
\]
In addition, if $b_j=-1$ then
\[
\sign(h_{\bl b}(x_j)-r_j)=-1 \Rightarrow  h_{\bl b}(x_j)-r_j <0 \Rightarrow\ind_{h_{\bl b}(x)\geq r}(x_j,r_j)=0.
\]
This is true for every $\bl b$; therefore, we showed that $\mG^{\geq}$ shatters $\mS'$.

In the opposite direction, assume by contradiction that $\mG^{\geq}$ shatters $\mS'=\left( (x_1,r_1),\dots,(x_m,r_m) \right)$ for $m\geq d_i+1$. Let $H=\{ h_{\bl b} \}_{\bl b\in \{-1,1\}} \subset \mH_i$ be a set of functions such that for every $\bl b$ there exists exactly one function $h_{\bl b}\in H$ satisfying $\ind_{h_{\bl b}(x)\geq r}(x_j,r_j)=1$ if $b_j=1$ and $\ind_{h_{\bl b}(x)\geq r}(x_j,r_j)=0$ if $b_j=0$. Notice that by definition of VC dimension such $H$ must exist, and that $\abs{H}=2^m$.

One cannot claim directly that $\mH_i$ pseudo-shatters $\mS=(x_1,\dots,x_m)$ with witness $\bl r=(r_1,\dots r_m)$, since $h_{\bl b}(x_j)=r_j$ may hold for $b_j=1$, but we need $h_{\bl b}(x_j)$ to be strictly greater than $r_j$; therefore, we construct a new witness: let $a_j$ such that
\[
a_j= \max_{h_{\bl b}\in H, b_j=-1} h_{\bl b}(x_j).
\]
Notice that $\abs{H}$ is finite so the maximum is well defined. In addition, $a_j < r_j$ since $h_{\bl b}(x_j) <r_j$ for every $\bl b$ such that $b_j=-1$ (recall that if $b_j=0$ then  $\ind_{h_{\bl b}(x)\geq r}(x_j,r_j)=0$ ).

Denote $\bl r^* = \frac{\bl a +\bl r}{2}$. Next, we claim that $\mH_i$ pseudo-shatters $\mS$ with the witness $\bl r^*$. Fix $\bl b\in\{-1,1\}^m$. If $b_j=-1$,
\[
\ind_{h_{\bl b}(x)\geq r}(x_j,r_j)=0 \Rightarrow h_{\bl b}(x_j) \leq  a_j \Rightarrow h_{\bl b}(x_j) < r_j^* \Rightarrow \ind_{h_{\bl b}(x)\geq r}(x_j,r^*_j)=0.
\]

On the other hand, if $b_j=1$, we have
\[
\ind_{h_{\bl b}(x)\geq r}(x_j,r_j)=1 \Rightarrow h_{\bl b}(x_j) \geq  r_j \Rightarrow h_{\bl b}(x_j) > r_j^* \Rightarrow \ind_{h_{\bl b}(x)\geq r}(x_j,r^*_j)=1.
\]
Combining these two equations, we get that $\sign(h_{\bl b}(x_j)-r_j^*)=b_j$ for all $j\in [m]$. Consequently, $\mH_i$ pseudo-shatters $\mS$ with witness $\bl r^*$; hence we obtained a contradiction.

Overall, we showed that $\vc(\mG^{\geq}) \geq d_i$ and $\vc(\mG^{\geq}) \leq d_i$; hence $\vc(\mG^{\geq})=d_i$.
\end{proof}

\begin{proof}[\textbf{Proof of Claim \ref{claim:fgrowth}}]
Denote by $\mS=(x_j,y_j,t_j)_{j=1}^m \in \mZ^m$ an arbitrary sample, and let $\mF_i \cap \mS$ be the restriction of $\mF_i$ to $\mS$. Formally,
\[
\mF_i \cap \mS = \left\{(f_h(z_1),\dots,f_h(z_m))\mid f_h \in \mF_i  \right\}.
\] 

In addition, denote by $G^\geq $ the restriction of $\mG^\geq$ to $(x_j,y_j-t_j)_{j=1}^m$, and similarly let $G^\leq $ be the restriction of $\mG^\leq$ to $(x_j,y_j+t_j)_{j=1}^m$. We now show a one-to-one mapping $M: {\mF_i} \cap {\mS} \rightarrow G^\geq \times G^\leq$, implying that 
\begin{equation}
\label{eq:mnkasd}
\abs{{\mF_i} \cap {\mS} } \leq \abs{G^\geq \times G^\leq }
\end{equation}
holds, thereby proving the assertion. Notice that for every $f_h \in \mF_i$ such that $f_h(z)=1$ we have $\mI(z,h)=1$ for the corresponding $h\in \mH_i$; thus 
\begin{equation}
\label{eq:uioet}
-t_j\leq h(x_j)-y_j \leq t_j  
\Rightarrow
\begin{cases}
\ind_{ h(x_j) \leq y_j+t_j} = 1 \\
\ind_{ h(x_j) \geq y_j-t_j} = 1
\end{cases}
\Rightarrow
\begin{cases}
g_h^\leq (x_j,y_j+t_j)=1 \\
g_h^\geq (x_j,y_j-t_j)=1
\end{cases}. \footnote{Recall the definition of $g_h^\leq,g_h^\geq$ in Equation (\ref{eq:ggeqeqdef}).}
\end{equation}
Alternatively, if $f_h(z_j)=0$, we have $\mI(z_j,h)=0$ and
\begin{align}
\label{eq:uigfgfg}
&(h(x_j)-y_j < -t_j)  \lor ( h(x_j)-y_j > t_j) \Rightarrow 
(\ind_{ h(x_j) < y_j-t_j} = 1) \lor  (\ind_{ h(x_j)> y_j+t_j} =1) \nonumber \\
& \Rightarrow (\ind_{ h(x_j) \geq y_j-t_j} = 0) \lor  (\ind_{ h(x_j) \leq y_j+t_j} = 0) \Rightarrow (g_h^\geq (x_j,y_j-t_j)=0) \lor  (g_h^\leq (x_j,y_j+t_j) = 0).
\end{align}
By Equations (\ref{eq:uioet}) and (\ref{eq:uigfgfg}) we have
\begin{equation}
\label{eq:adfjnaf}
\begin{cases}
f_h(z_j)=1 \Rightarrow (g_h^\leq (x_j, y_j+t_j),g_h^\geq(x_j, y_j-t_j))=(1,1)\\
f_h(z_j)=0 \Rightarrow (g_h^\leq(x_j, y_j+t_j),g_h^\geq (x_j, y_j-t_j))\in \{(0,0),(0,1),(1,0)\}
\end{cases}.
\end{equation}
We define the mapping $M$ such that every vector $\left(\mI(z_1,h_1),\dots,\mI(z_m,h_1) \right)\in \mF_i \cap \mS$ is mapped to 
\[
\left(g_h^\geq (x_1, y_1-t_1),\dots, g_h^\geq (x_m, y_m-t_m),g_h^\leq (x_1, y_1+t_1),\dots ,g_h^\leq (x_m, y_m+t_m) \right)
\in G^\geq \times G^\leq.
\]
Namely, every vector obtained by applying $f_h$ on the sample $\mS$ is mapped to the vector formed by concatenating the two corresponding (same $h$) vectors from $G^\leq $ and $G^\geq $. Let $\bl b^1,\bl b^2 \in {\mF_i} \cap \mS$ such that  $b^1_j \neq b^2_j$ for at least one index $j\in[m]$, and w.l.o.g. let $b^1_j=1$ . Since $b^1_j=f_h(z_j)=\mI(z_j,h)$, Equation (\ref{eq:adfjnaf}) implies that $M(\bl b^1)_j = M(\bl b^1)_{j+m} = 1$, while at least one of $\{  M(\bl b^2)_j, M(\bl b^2)_{j+m}\}$ equals zero, thus $M(\bl b^1) \neq M(\bl b^2)$. Hence $M$ is an injection. 

Ultimately, notice that $\mS$ is arbitrary, thus
\[\Pi_{\mF_i}(m) =\max_{\mS\in \mZ^m} \abs{{\mF_i}\cap \mS }\leq \abs{G^\geq \times G^\leq } = \abs{G^\geq } \cdot \abs{ G^\leq } \leq \Pi_{\mG^{\geq}}(m) \cdot \Pi_{\mG^{\leq}}(m).\]

\end{proof}

\begin{claim}
\label{claim:binomialfactors}
$\left(\frac{n}{k}  \right)^k \leq {n \choose k} \leq \left(\frac{e n}{k}  \right)^k   $. 
\end{claim}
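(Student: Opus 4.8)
The plan is to prove the two inequalities separately, in both cases starting from the factored representation
\[
{n \choose k} = \prod_{i=0}^{k-1} \frac{n-i}{k-i},
\]
which is valid for integers $1 \leq k \leq n$ (the regime in which the claim is used, e.g.\ in the proof of Lemma~\ref{lemma:regtoclass}). Each factor is positive since $0 \leq i \leq k-1 < k \leq n$.

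For the lower bound I would show that \emph{every} factor in this product is at least $\frac{n}{k}$. Fixing $i \in \{0,\dots,k-1\}$ and cross-multiplying by the positive quantities $k$ and $k-i$, the inequality $\frac{n-i}{k-i} \geq \frac{n}{k}$ is equivalent to $k(n-i) \geq n(k-i)$, which simplifies to $ni \geq ki$; this holds because $n \geq k$ and $i \geq 0$. Multiplying these $k$ factorwise inequalities together yields ${n \choose k} \geq \left(\frac{n}{k}\right)^k$.

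For the upper bound I would first discard the denominators in favour of the cruder estimate ${n \choose k} = \frac{n(n-1)\cdots(n-k+1)}{k!} \leq \frac{n^k}{k!}$, obtained by replacing each numerator factor $n-i$ by $n$. It then suffices to bound $k!$ from below by $(k/e)^k$, and for this I would invoke the series expansion $e^k = \sum_{j=0}^{\infty} \frac{k^j}{j!} \geq \frac{k^k}{k!}$, keeping only the single $j=k$ term. Rearranging gives $k! \geq \frac{k^k}{e^k}$, and substituting back produces ${n \choose k} \leq \frac{n^k}{k!} \leq \frac{n^k e^k}{k^k} = \left(\frac{en}{k}\right)^k$.

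There is no genuine obstacle here: both halves are elementary once the product representation is in hand. The only steps meriting a moment's care are the direction of the factorwise comparison in the lower bound (which is precisely where the hypothesis $n \geq k$ enters) and the observation that a single term of the exponential series already delivers the required factorial estimate for the upper bound.
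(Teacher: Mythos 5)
Your proof is correct and follows essentially the same route as the paper's: the lower bound via the factorwise comparison $\frac{n-i}{k-i} \geq \frac{n}{k}$ applied to the product representation of $\binom{n}{k}$, and the upper bound via $\binom{n}{k} \leq \frac{n^k}{k!}$ together with the single-term estimate $e^k \geq \frac{k^k}{k!}$ from the exponential series. No issues to report.
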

\begin{proof}[\textbf{Proof of Claim \ref{claim:binomialfactors}}] We prove the two claims separately.

$\bullet$ $\left(\frac{n}{k}  \right)^k \leq {n \choose k}$: fix $n$. We prove by induction for $k\leq n$. The assertion holds for $k=1$. For $k\geq 2$ and every $m$ such that $0<m<k\leq $ we have
\begin{equation*}
\label{eq:gkkmpdf}
k \leq n \Rightarrow \frac{m}{n} \leq \frac{m}{k} \Rightarrow 1 - \frac{m}{k} \leq 1- \frac{m}{n} \Rightarrow \frac{k-m}{k} \leq \frac{n-m}{n} \Rightarrow \frac{n}{k} \leq \frac{n-m}{k-m};
\end{equation*}
thus
\[
\left(\frac{n}{k}\right)^k = \frac{n}{k} \cdots \frac{n}{k}\leq \frac{n}{k}\frac{n-1}{k-1}\cdots\frac{n-k+1}{k-k+1} = \binom{n}{k}.
\]

$\bullet$ $ {n \choose k} \leq \left(\frac{e n}{k}  \right)^k  $:  since $e^k=\sum_{i=0}^\infty \frac{k^i}{i !}$ (the Taylor expansion of $e^k$), we have $e^k > \frac{k^k}{k!}$; thus, $\frac{1}{k!} < \left(\frac{e}{k}\right)^k$. As a result,
\[
\binom{n}{k}=\frac{n\cdot (n-1) \cdots (n-k+1}{k!} \leq \frac{n^k}{k!}<\left(\frac{en}{k}   \right)^k.
\]
\end{proof}

\begin{claim} 
\label{claim:flowerbound}
$\vc(\mF_i) \geq \pdim(\mH_i)$.
\end{claim}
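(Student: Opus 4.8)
The plan is to prove the reverse inequality $\vc(\mF_i) \geq \pdim(\mH_i)$ by exhibiting a set of $d_i \defeq \pdim(\mH_i)$ points in $\mZ$ that is shattered by $\mF_i$. Since $\pdim(\mH_i) = d_i$, there is a sample $(x_1, \dots, x_{d_i}) \in \mX^{d_i}$ and a witness $\bl r = (r_1, \dots, r_{d_i}) \in \R^{d_i}$ that is pseudo-shattered by $\mH_i$; that is, for every $\bl b \in \{-1,1\}^{d_i}$ there is $h_{\bl b} \in \mH_i$ with $\sign(h_{\bl b}(x_j) - r_j) = b_j$ for all $j$. The one conceptual difficulty is that pseudo-shattering produces a \emph{one-sided} threshold condition (either $h(x_j) > r_j$ or $h(x_j) < r_j$), whereas the indicator $\mI(z, h)$ encodes membership in the two-sided interval $[y - t, y + t]$. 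The key observation that resolves this is that only the finite family $\{h_{\bl b}\}_{\bl b \in \{-1,1\}^{d_i}}$ is relevant, so the values $\{h_{\bl b}(x_j)\}_{\bl b}$ are bounded; I can therefore push the upper endpoint of the tolerance interval past all of them, making the interval behave like a half-line on the functions that matter.

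Concretely, for each coordinate $j$ I would set $r_j^+ = \min\{h_{\bl b}(x_j) : b_j = 1\}$ and $r_j^- = \max\{h_{\bl b}(x_j) : b_j = -1\}$; since the signs are strict we have $r_j^- < r_j < r_j^+$, so these two groups of values are separated. Let $M_j = \max_{\bl b} h_{\bl b}(x_j)$, put $\ell_j = \tfrac{1}{2}(r_j^- + r_j^+)$ and $u_j = M_j$, and finally define the point $z_j = (x_j, y_j, t_j)$ with $y_j = \tfrac{1}{2}(\ell_j + u_j)$ and $t_j = \tfrac{1}{2}(u_j - \ell_j) > 0$, so that $y_j - t_j = \ell_j$ and $y_j + t_j = u_j$. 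This step implicitly uses that $\mY$ and $\mT$ are rich enough to contain the constructed $y_j, t_j$ (e.g. $\mY = \R$ and $\mT = \R_{\geq 0}$), which I would state as an explicit assumption. By construction $\mI(z_j, h) = 1$ precisely when $h(x_j) \in [\ell_j, u_j]$.

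To verify shattering, fix any labeling $\bl c \in \{0,1\}^{d_i}$ and translate it into $\bl b \in \{-1,1\}^{d_i}$ via $b_j = 1$ when $c_j = 1$ and $b_j = -1$ otherwise, then take the corresponding $h_{\bl b}$. I would check the two cases: if $c_j = 1$ then $r_j^+ \leq h_{\bl b}(x_j) \leq M_j$, hence $\ell_j < h_{\bl b}(x_j) \leq u_j$ and $\mI(z_j, h_{\bl b}) = 1$; if $c_j = 0$ then $h_{\bl b}(x_j) \leq r_j^- < \ell_j$, so $h_{\bl b}(x_j)$ lies below the interval and $\mI(z_j, h_{\bl b}) = 0$. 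Thus the vector $\bigl(\mI(z_1, h_{\bl b}), \dots, \mI(z_{d_i}, h_{\bl b})\bigr)$ equals $\bl c$, every labeling is realized, and $\mF_i$ shatters $\{z_1, \dots, z_{d_i}\}$, yielding $\vc(\mF_i) \geq d_i = \pdim(\mH_i)$.

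The main obstacle, as noted, is exactly the mismatch between the one-sided threshold underlying the pseudo-dimension and the interval form of $\mI$; the finiteness of the witnessing family is what lets a fixed interval simulate a threshold, and the only genuine hypothesis is that the label and tolerance domains can host the constructed endpoints. As an alternative route I could pass through Claim \ref{claim:auxiliaryg}: a set pseudo-shattered by $\mH_i$ is shattered by $\mG^{\geq}$, and the same endpoint-pushing construction embeds a set shattered by $\mG^{\geq}$ into one shattered by $\mF_i$, giving the same bound.
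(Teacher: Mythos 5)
Your proof is correct and takes essentially the same route as the paper's: both start from a pseudo-shattered set with witness $\bl r$ and exploit the finiteness of the $2^{d_i}$ witnessing functions to place the label $y_j$ and tolerance $t_j$ so that one endpoint of the interval $[y_j-t_j,\,y_j+t_j]$ lies beyond all relevant values of $h_{\bl b}(x_j)$, turning the two-sided condition $\mI$ into a one-sided threshold. Your midpoint choice of $\ell_j$ is in fact slightly cleaner than the paper's construction, since it avoids the case split on $\sign(r_j)$ and the boundary-equality perturbation the paper has to mention.
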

\begin{proof}[\textbf{Proof of Claim \ref{claim:flowerbound}}]

Denote $\mS=\left(x_1,\dots,x_m\right)\in \mX^m$ and $\bl r \in \R^m$ such that $\mH_i$ pseudo-shatters $\mS$ with witness $\bl r$. We prove the claim by showing that we can construct $\mS' \in \mZ^m$ that is shattered by $\mF_i$. For every binary vector $\bl b\in\{-1,1\}^m$ there exists $h_{\bl b}$ such that $\sign(h_b(x_j)-r_j)=b_j$ for every $j\in [m]$. 

Denote $H = \{h_{\bl b} \in \mH_i \}_{\bl b \in \{-1,1\}^m}$ such that $\abs{H} =2^m$. For every $j$ such that $r_j \geq 0$, let
\[
y_j = \min \left\{0,\min_{h_{\bl b} \in H}  h_{\bl b}(x_j)    \right\}.
\]
In addition, for every $j$ such that $r_j < 0$, let
\[
y_j = \max \left\{0,\max_{h_{\bl b} \in H}  h_{\bl b}(x_j)    \right\},
\]
and denote
\[
\mS' = \left( x_j,y_j,\abs{r_j} -\sign(r_j) t_j  \right)_{j=1}^m.
\]
We now show that $\mF_i$ shatters $\mS'$. Fix an arbitrary $\bl b\in\{-1,1\}^m$, and observe that in case $r_j \geq 0$
\begin{equation}
\label{eq:asdasdasd}
\begin{cases}
b_j = 1 \\
b_j = -1
\end{cases}
\Rightarrow
\begin{cases}
h_{\bl b}(x_j) \geq r_j  \\
h_{\bl b}(x_j) < r_j
\end{cases}
\Rightarrow
\begin{cases}
h_{\bl b}(x_j) -y_j \geq r_j -y_j \\
h_{\bl b}(x_j) -y_j < r_j -y_j 
\end{cases}
\Rightarrow
\begin{cases}
\abs{h_{\bl b}(x_j) -y_j }\geq r_j -y_j \\
\abs{h_{\bl b}(x_j) -y_j }< r_j -y_j
\end{cases},
\end{equation}
where the last argument holds since $h_{\bl b}(x_j)\geq y_j$. Alternatively, if $r_j <0$ we have
\begin{equation}
\label{eq:asdasdyju}
\begin{cases}
b_j = 1 \\
b_j = -1
\end{cases}
\Rightarrow
\begin{cases}
h_{\bl b}(x_j) \geq r_j  \\
h_{\bl b}(x_j) < r_j
\end{cases}
\Rightarrow
\begin{cases}
-h_{\bl b}(x_j)+y_j \leq -r_j +y_j \\
-h_{\bl b}(x_j) +y_j> -r_j+y_j
\end{cases}
\Rightarrow
\begin{cases}
\abs{-h_{\bl b}(x_j)+y_j} \leq \abs{r_j} +y_j \\
\abs{-h_{\bl b}(x_j)+y_j} > \abs{r_j} +y_j 
\end{cases},
\end{equation}
where again the last set of inequalities holds since $h_{\bl b}(x_j)\leq y_j$. 
In case one of Equations (\ref{eq:asdasdasd}) and (\ref{eq:asdasdyju}) holds in equality, we can slightly shift $r_j$ (as was done in the proof of Claim \ref{claim:auxiliaryg}); hence we assume these are strict inequalities. The expression in Equation (\ref{eq:asdasdasd}) corresponds to $\mI(h_{\bl b},(x_j,y_j,r_j-y_j))$, while that of Equation (\ref{eq:asdasdyju}) corresponds to $\mI(h_{\bl b},(x_j,y_j,\abs{r_j}+y_j))$.

This analysis applies for every $\bl b$; hence, $\mF_i$ shatters $\mS'$ as required.
\end{proof}

\begin{claim}
\label{claim:rqep}
For a given $\bl h$ it holds that
\begin{align}
\label{eq:hoefbdvg}
\Pr_{\mS' \sim \mD^m}\left(\abs{\pi_i(\bl h)-\pi_i^{\mS'}(\bl h)}  \leq \frac{\epsilon}{2}   \right) \geq \frac{1}{2}.
\end{align}
\end{claim}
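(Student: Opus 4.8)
The plan is to observe that, for the \emph{fixed} profile $\bl h$ appearing in the statement, the empirical payoff $\pi_i^{\mS'}(\bl h)$ is simply an average of $m$ i.i.d.\ random variables, and then to apply a standard concentration bound. Writing $\pi_i^{\mS'}(\bl h)=\frac{1}{m}\sum_{j=1}^m w_i(z_j;\bl h)$ with $z_1,\dots,z_m\sim\mD$ independently, each summand $w_i(z_j;\bl h)$ takes values in $[0,1]$ by construction (it is either $0$ or $\frac{1}{k}$ for some $k\in[N]$), and its expectation is exactly $\pi_i(\bl h)=\E_{z\sim\mD}[w_i(z;\bl h)]$. Hence $\pi_i^{\mS'}(\bl h)$ is an unbiased estimator of $\pi_i(\bl h)$, and what must be controlled is the deviation of an empirical mean of bounded variables from its expectation.

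First I would bound the variance. Since each $w_i(z_j;\bl h)\in[0,1]$, we have $\Var\bigl(w_i(z;\bl h)\bigr)\leq\frac{1}{4}$, so by independence $\Var\bigl(\pi_i^{\mS'}(\bl h)\bigr)=\frac{1}{m}\Var\bigl(w_i(z;\bl h)\bigr)\leq\frac{1}{4m}$. Chebyshev's inequality then gives
\[
\Pr_{\mS'\sim\mD^m}\left(\abs{\pi_i(\bl h)-\pi_i^{\mS'}(\bl h)}>\frac{\epsilon}{2}\right)\leq\frac{\Var\bigl(\pi_i^{\mS'}(\bl h)\bigr)}{(\epsilon/2)^2}\leq\frac{1}{m\epsilon^2}.
\]
Passing to the complementary event yields $\Pr\bigl(\abs{\pi_i(\bl h)-\pi_i^{\mS'}(\bl h)}\leq\frac{\epsilon}{2}\bigr)\geq 1-\frac{1}{m\epsilon^2}$, which is at least $\frac{1}{2}$ as soon as $m\geq\frac{2}{\epsilon^2}$.

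The one subtlety is that the bound only becomes useful for $m$ large enough; Claim~\ref{claim:rqep} should therefore be read under the mild regime $m\geq 2/\epsilon^2$ that is standard in the symmetrization step of the Vapnik--Chervonenkis argument, and which is in any case satisfied wherever the claim is invoked, since there $m$ is ultimately chosen large. Alternatively, one can bypass Chebyshev and apply Hoeffding's inequality directly to the $[0,1]$-valued variables $w_i(z_j;\bl h)$, obtaining $\Pr\bigl(\abs{\pi_i(\bl h)-\pi_i^{\mS'}(\bl h)}\geq\frac{\epsilon}{2}\bigr)\leq 2e^{-m\epsilon^2/2}$, which drops below $\frac{1}{2}$ once $m\geq\frac{2\ln 4}{\epsilon^2}$; either route closes the argument. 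I do not expect a genuine obstacle here --- the only points to verify carefully are the $[0,1]$ range of $w_i$ and the resulting variance bound, both immediate from the definition of $w_i$, together with the implicit largeness of $m$.
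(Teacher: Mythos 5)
Your argument is correct and is essentially identical to the paper's own proof: both apply Chebyshev's inequality to the empirical mean $\pi_i^{\mS'}(\bl h)$ of i.i.d.\ $[0,1]$-valued variables, bound the variance by $\frac{1}{4m}$ (the paper cites Popoviciu's inequality for this, which is the same bound you get from the range), and conclude under the same implicit assumption $m \geq \frac{2}{\epsilon^2}$. Your remark that the claim only holds in this regime, and your alternative Hoeffding route, are both sound but add nothing beyond what the paper does.
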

\begin{proof}[\textbf{Proof of Claim \ref{claim:rqep}}]
Recall Chebyshev's inequality
\[
\Pr \left(\abs{X - \E[x]} \geq \epsilon \right) \leq \frac{\Var(X)}{\epsilon^2}.
\]
Applying it for our problem, we get
\[
\Pr_{\mS' \sim \mD^m}\left(\abs{\pi_i(\bl h)-\pi_i^{\mS'}(\bl h)}  \geq \frac{\epsilon}{2}   \right) \leq \frac{\Var(\pi_i^{\mS'}(\bl h))}{\frac{\epsilon^2}{4}}.
\]
Notice that $\pi_i^{\mS'}(\bl h)$ is the average of independent random variables bounded in the $[0,1]$ segment; hence, by Popoviciu's inequality on variances we have 
\[
\Var(\pi_i^{\mS'}(\bl h)) \leq \frac{1}{4m}.
\]
Finally, for $m \geq \frac{2}{\epsilon^2 }$ it holds that
\[
\Pr_{\mS' \sim \mD^m}\left(\abs{\pi_i(\bl h)-\pi_i^{\mS'}(\bl h)}  \leq \frac{\epsilon}{2}   \right)=1-\Pr_{\mS' \sim \mD^m}\left(\abs{\pi_i(\bl h)-\pi_i^{\mS'}(\bl h)}  \geq \frac{\epsilon}{2}   \right) \geq 1-\frac{\frac{1}{4m}}{\frac{\epsilon^2}{4}}=1-\frac{1}{\epsilon^2m}\geq \frac{1}{2}.
\]
\end{proof}

\begin{claim} 
\label{claim:auxhoeffbin}
Let $m\geq 15$, $(X_i)_{i=1}^m$ be a sequence of i.i.d. Bernoulli r.v. with $p=\frac{1}{2}$, and let $\bar X = \frac{1}{m} \sum_{i=1}^m X_i$. Then $\Pr\left(\frac{1}{2} < \bar X < \frac{3}{4}\right) \geq \frac{1}{4}$ .
\end{claim}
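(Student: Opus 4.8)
The plan is to reduce the statement to a fact about the binomial distribution, and then combine an exact symmetry computation for the lower tail with a Hoeffding bound for the upper tail. Write $S=\sum_{i=1}^m X_i = m\bar X$, so that $S$ is distributed as $\mathrm{Binomial}(m,\tfrac12)$ and the event $\{\tfrac12<\bar X<\tfrac34\}$ is exactly $\{\tfrac m2<S<\tfrac{3m}4\}$. I would start from the complement decomposition
\[
\Pr\!\left(\tfrac12<\bar X<\tfrac34\right)=1-\Pr\!\left(\bar X\le\tfrac12\right)-\Pr\!\left(\bar X\ge\tfrac34\right),
\]
so that it suffices to upper bound each of the two tails.

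First I would handle the lower tail exactly, exploiting that $S$ is a symmetric binomial. The reflection $S\mapsto m-S$ gives $\Pr(S\le\tfrac m2)=\Pr(S\ge\tfrac m2)$, and these two probabilities sum to $1+\Pr(S=\tfrac m2)$; hence $\Pr(\bar X\le\tfrac12)=\tfrac12\bigl(1+\Pr(S=\tfrac m2)\bigr)$. For odd $m$ the correction term vanishes, and for even $m$ I would bound the central atom by the central binomial coefficient estimate $\Pr(S=\tfrac m2)=\binom{m}{m/2}2^{-m}\le\sqrt{2/(\pi m)}$. For the upper tail I would invoke Hoeffding's inequality (already used in Lemma~\ref{lemma:uniconvergenceoneplayer}): since the $X_i$ lie in $[0,1]$ with mean $\tfrac12$, $\Pr(\bar X\ge\tfrac34)=\Pr(\bar X-\tfrac12\ge\tfrac14)\le e^{-2m(1/4)^2}=e^{-m/8}$.

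Combining the two estimates yields
\[
\Pr\!\left(\tfrac12<\bar X<\tfrac34\right)\ge\frac{1-\Pr(S=\tfrac m2)}{2}-e^{-m/8},
\]
and it remains to verify that the right-hand side is at least $\tfrac14$ for every $m\ge15$. For odd $m$ this reduces to $\tfrac12-e^{-m/8}\ge\tfrac14$, i.e. $m\ge8\ln4\approx11.1$, which holds. For even $m$ both correction terms are decreasing in $m$, so the worst case is the smallest admissible value $m=16$, where the sharper factor $\sqrt{2/(\pi m)}$ is exactly what pushes the bound above $\tfrac14$; all larger even $m$ then follow.

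The main obstacle is precisely this tightness at $m=16$: a crude bound such as $\binom{m}{m/2}2^{-m}\le1/\sqrt m$ is \emph{not} good enough — it barely fails there — so the argument needs the sharper factor $\sqrt{2/(\pi m)}$ coming from $\binom{2n}{n}\le 4^n/\sqrt{\pi n}$. An alternative that avoids any delicate constant is to check the one or two borderline even values ($m=16$, and if desired $m=18$) by direct summation of the three relevant binomial terms, and to apply the asymptotic Hoeffding-plus-symmetry argument only for the remaining $m$. Everything else is routine bookkeeping.
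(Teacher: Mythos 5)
Your proposal is correct and follows essentially the same route as the paper: write the probability as one minus the two tails, use the symmetry of the $\mathrm{Binomial}(m,\tfrac12)$ distribution for the lower tail, and apply Hoeffding with $\epsilon=\tfrac14$ to the upper tail. The one substantive difference is that you are \emph{more} careful than the paper. The paper's proof simply writes $\Pr\left(\bar X\le\tfrac12\right)=\tfrac12$, which is only true for odd $m$; for even $m$ one has $\Pr\left(\bar X\le\tfrac12\right)=\tfrac12\left(1+\Pr\left(S=\tfrac m2\right)\right)>\tfrac12$, so the paper's chain of inequalities has a small gap there. You correctly isolate the central atom, bound it by $\binom{m}{m/2}2^{-m}\le\sqrt{2/(\pi m)}$, and check the borderline even case $m=16$, where indeed $\tfrac12\left(1-\sqrt{2/(16\pi)}\right)-e^{-2}\approx 0.265>\tfrac14$ and monotonicity in $m$ handles the rest. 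So your argument is a valid refinement that patches the paper's oversight; the claim itself remains true in all cases (e.g.\ the exact value at $m=16$ is about $0.364$), but your version is the one that actually proves it for even $m$.
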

\begin{proof}[\textbf{Proof of Claim \ref{claim:auxhoeffbin}}]
By Hoeffding's inequality we have $\Pr\left( \bar X \geq (p+\epsilon) \right)\leq \exp\left( -2\epsilon^2 m \right) $. Therefore
\begin{align*}
\Pr\left(\frac{1}{2} < \bar X < \frac{3}{4}\right) = \Pr\left( \bar X < \frac{3}{4}\right) - \Pr\left(\bar X \leq \frac{1}{2} \right) = 1 - \Pr\left( \bar X \geq \frac{3}{4}\right) -\frac{1}{2} \stackrel{\epsilon= \frac{1}{4}}{\geq} \frac{1}{2}-e^{\frac{-m}{8} } \stackrel{ m\geq 15}{\geq} \frac{1}{4}.
\end{align*}
\end{proof}

\section{Best response for the linear strategy space}
\label{sec:bestresponsregression}
In this section we build on the results of \cite{porat2017best} to devise a best response oracle for the linear hypothesis class. Let $\mH_i$ be the linear hypothesis class. Namely, $\mH_i$ is the function class such that each $\bl h_i \in \mH_i$ corresponds to the mapping  $\bl x \mapsto \bl h_i \cdot \bl x  $. \footnote{Note that now a strategy of player $i$ is itself a vector. To emphasize the vector arithmetics we use bold notation also for the instances, i.e. $\bl x$.  } Notice that under this representation, $\mH_i=\mathbb R^{n}$.

From here on we assume that the dimension of the input $n$ is fixed. By slightly modifying the algorithm given in \cite{porat2017best}, we show how player $i$ can compute a best response against any $\bl h_{-i}$. Indeed, when $n$ is fixed, our proposed algorithm is guaranteed to run in polynomial time. In particular, the run time complexity of our algorithm is independent of the hypothesis class complexity of other players.

As a first step, we consider the Partial Vector Feasibility problem (PVF). 

\begin{algorithm}[H]

\SetAlgorithmName{Problem}{PVF}{PVF}
\caption{\textsc{Partial Vector Feasibility (PVF)} }
\KwIn{a sequence of examples $\mS=(x_j,y_j,t_j)_{j=1}^m$, and a vector $\bl v\in \{1,a,b,0\}^m $ 	
}
\KwOut{a point $\bl h_i \in \R^n$ satisfying\\
	\begin{itemize}
	\item if $v_j=1$, then $\abs{ \bl h_i \cdot \bl x_j -y_j}<t_j$
	\item if $v_j=a$, then $\bl h_i \cdot \bl x_j-y_j>t_j$ \tcp{above}
	\item if $v_j=b$, then $\bl h_i \cdot \bl x_j  -y_j<-t_j$ \tcp{below}
	\item if $v_j=0$, there is no constraint for the $j$'th point
	\end{itemize}
	if such exists, and $\phi$ otherwise.
}
\label{problem:partialvectorfeasibility}
\end{algorithm}
Note that PVF is solvable in polynomial time via linear programming. We are now ready to present the Best Linear Response (BLR) algorithm. BLR has three main steps:
\begin{enumerate}
\item compute the potential payoff from each point in the sample
\item find all feasible subsets of points player $i$ can satisfy concurrently (i.e. vectors in $\mF_i \cap \mS$, where $\mF_i$ is as defined in Equation (\ref{eq:defoff}))
\item return a strategy that achieves the highest possible payoff.
\end{enumerate}

The first step consists of a straightforward computation. To motivate the second step, notice that if $\mI(z_j,\bl h_i)=0$, then either $\bl h_i \cdot \bl x_j-y_j>t_j$ or $\bl h_i \cdot \bl x_j  -y_j<-t_j$ holds. Therefore, we identify all vectors $\bl v=(v_1,\dots v_m)\in \{1,a,b\}^m$ such that there exists $\bl h_i \in \mH_i$ and $v_j = 1$ if $\mI(z_j,\bl h_i)=1$; $v_j=a$ if $\bl h_i \cdot \bl x_j-y_j>t_j$ (``above''); and $v_j = b$ if $\bl h_i \cdot \bl x_j  -y_j<-t_j$ (``below''). This is done by recursive partition of $\{1,a,b\}^m$, where in each iteration we consider only partial vectors, i.e. vectors with entries masked with ``0'' (see $\textsc{PVF}$). At the end of this step we have fully identified the set $\mF_i \cap \mS$, and have further information regarding zero entries of each vector in it - whether it corresponds to ``above'' or ``below'', in the aforementioned sense. Finally, we have all possible payoffs, so we pick a vector corresponding to the highest one. We then find a strategy that attains it by invoking \textsc{PVF} for the last time.

The above discussion is formulated via the following algorithm.

\RevertAlgoNumber
\begin{algorithm}[H]
\setcounter{algocf}{1}
\label{alg:rnwithoracle}
\DontPrintSemicolon
\KwIn{$\mS=(x_j,y_j,t_j)_{j=1}^m$, $\bl h_{-i}$}
\KwOut{A best response to $\bl h_{-i}$}
for every $j\in[m]$, $w_j \gets \frac{1}{\sum_{i'\neq i} \mI(z_j,h_{i'})+1} $ \label{alg:rnwithoracle:linew} \tcp*{player $i$ gets $w_j$ when satisfying $z_j$}
$\boldsymbol {v} \gets \{0 \}^m$  \tcp*{ $\boldsymbol {v} =(v_1,v_2,\dots,v_m)$}
$\mR_0 \gets \left\{\boldsymbol {v} \right\}$  \;
\For {$j=1$ to $m$} { \label{alg:rnwithoracle:lineifor}
	$\mR_j \gets \emptyset$ \;
	\For {$\boldsymbol {v} \in \mR_{j-1}$} { \label{alg:rnwithoracle:linefor}
		\For {$\alpha \in \{1,a,b\}$} { \label{alg:rnwithoracle:lineforallv}
			\If {$\textsc{PVF} \left(\mS,(\boldsymbol {v}_{-j},\alpha )\right) \neq \phi$}{ \label{alg:rnwithoracle:lineoracle}
			add $(\boldsymbol {v}_{-j},\alpha )$ to $\mR_j$  \tcp*{$(\boldsymbol {v}_{-j},\alpha )=(v_1,\dots v_{j-1},\alpha,v_{j+1},\dots,v_m)$} 
			}
		}
	}
}
$\boldsymbol {v}^* \gets  \argmax_{\boldsymbol {v} \in \mR_m} \sum_{j=1}^m w_j\ind_{v_j=1}$ \label{alg:rnwithoracle:argmax}
\tcp*{one such vector must exist}
\Return \textsc{PVF}$(\bl v^*)$ \; \label{alg:rnwithoracle:return}
\caption{\textsc{Best Linear Response} (BLR)}
\end{algorithm}

Theorem 2 in \cite{porat2017best} shows that the second step (for loop in line \ref{alg:rnwithoracle:lineifor}) is done in time $poly(m)$, and that $\mR_m$ is of polynomial size. The first step (line \ref{alg:rnwithoracle:linew}) and the last step (lines \ref{alg:rnwithoracle:argmax} and \ref{alg:rnwithoracle:return}) are clearly executed in polynomial time. Overall, \textsc{BLR} runs in polynomial time. In addition, since it considers all possible distinct strategies using $\mF_i \cap \mS$ and takes the one with the highest payoff, it indeed returns the best linear response with respect to $\bl h_{-i}$ in the empirical game.

}\fi} 

\end{document}